\newtheorem{theor}{Theorem}
\theoremstyle{definition}
\newtheorem{state}[theor]{Proposition}
\newtheorem{example}{Example}
\theoremstyle{remark}
\newtheorem{rem}{Remark}
\DeclareFontFamily{OML}{cyr}{}
\DeclareFontShape{OML}{cyr}{m}{n}{
   <5> <6> <7> <8> <9> gen * wncyr
   <10> <10.95> <12> <14.4> <17.28> <20.74> <24.88> wncyr10
  }{}
\DeclareSymbolFont{rusletters}{OML}{cyr}{m}{n}
\DeclareSymbolFontAlphabet{\rusmath}{rusletters}
\DeclareMathSymbol\re{\rusmath}{rusletters}{"03}
\newcommand{\pinner}{\mathbin{\mathchoice
   {\hbox{\vrule width0.6em depth0pt height0.4pt
   \vrule width0.4pt depth0pt height0.8ex}}
   {\hbox{\vrule width0.6em depth0pt height0.4pt
   \vrule width0.4pt depth0pt height0.8ex}}
   {\hbox{\kern0.14em
   \vrule width0.48em depth0pt height0.4pt
   \vrule width0.4pt depth0pt height0.6ex\kern0.14em}}
   {\hbox{\kern0.1em
   \vrule width0.39em depth0pt height0.4pt
   \vrule width0.4pt depth0pt height0.5ex\kern0.1em}}}}
\newcommand{\BBR}{\mathbb{R}}
\newcommand{\BBN}{\mathbb{N}}
\newcommand{\cD}{\mathcal{D}}
\newcommand{\cE}{\mathcal{E}}
\newcommand{\cF}{\mathcal{F}}
\newcommand{\cH}{\mathcal{H}}
\newcommand{\cL}{\mathcal{L}}
\newcommand{\bh}{{\boldsymbol{h}}}
\newcommand{\bu}{\boldsymbol{u}}
\newcommand{\gm}{\mathfrak{m}}
\newcommand{\dd}{\partial}
\newcommand{\Id}{{\mathrm d}}
\newcommand{\rmi}{{\mathrm i}}
\DeclareMathOperator{\img}{im}
\DeclareMathOperator{\Sres}{Sres}
\newcommand{\tu}{\tilde{u}}
\newcommand{\by}[1]{\textit{{#1}}}
\newcommand{\jour}[1]{\textit{{#1}}}
\newcommand{\vol}[1]{\textbf{{#1}}}
\newcommand{\book}[1]{\textrm{{#1}}}
\title[Gardner's deformations of $N{=}2$, $a{=}4$ super\/-\/KdV]%
{Gardner's deformations of the 
$N{=}2$ supersymmetric $a{=}4$--\/KdV equation} 
\date{November~11, 2009}
\author[V.~Hussin]{V. 
Hussin}
\address{D\'epartement de Math\'ematiques et de Statistique,
Universit\'e de Montr\'eal,
C.P.~6128, succ.\ Centre\/-\/ville, Montr\'eal, Qu\'ebec H3C~3J7,
Canada.}
\curraddr{
Department of Mathematical Sciences, Durham University, 
Science Laboratories, South~Rd., Durham DH1~3LE, United Kingdom.%
}
\email{
hussin
@dms.umontreal.ca}
\author[A. V. Kiselev]{A. 
V. Kiselev${}^{*}$} 
\thanks{${}^*$\textit{Address for correspondence}:
   Max Planck Institute for Mathematics,
   Vivatsgasse 7, D-53111 Bonn, Germany.
\textit{E-mail}: \texttt{arthemy\symbol{"40}moim-bonn.mpg.de}.}
\address{
Mathematical Institute, University of Utrecht, P.O.Box~80.010, 3508~TA
Utrecht, The Netherlands.}
\email{A.V.Kiselev@uu.nl}
\author[A.~O.~Krutov]{A. 
O.~Krutov}
\address{%
Department of Higher Mathematics, Ivanovo State Power 
University, 34\,Rabfa\-kovskaya str., Ivanovo, 153003 Russia.}
\email{krutov@math.ispu.ru} 
\author[T.~Wolf]{T. 
Wolf}
\address{Department of Mathematics, Brock University,
500 Glenridge av.,  St.~Catharines, Ontario L2S~3A1, Canada.}
\email{twolf@brocku.ca}
\dedicatory{This paper is an extended version of the talks given by
A.~V.~K. at the 
8th International Conference `Symmetry in Nonlinear Mathematical Physics'
(June~20--27, 2009), Kiev, Ukraine, and at the 
9th International Workshop SQS'09
`Supersymmetry and quantum symmetries' (July~29 -- August~3,
2009), JINR, Dubna, Russia.}
\subjclass[2000]
{
35Q53, 
37K10, 
81T60; 
   secondary
37K05, 
37K35, 
81U15.
}
\keywords{Korteweg\/--\/de Vries equation, Kaup\/--\/Boussinesq equation,
supersymmetry, Gardner's deformations, 
bi\/-\/Hamiltonian hierarchies}
\begin{document}
\rightline{\jour{Preprint} MPIM(Bonn)-???/2009}

\begin{abstract}
We prove that P.~Mathieu's Open problem on constructing Gardner's deformation for   
the $N{=}2$ supersymmetric $a{=}4$--\/Korteweg\/--\/de Vries equation
has no supersymmetry\/-\/invariant solutions, whenever it is assumed that 
they retract to Gardner's deformation of the scalar KdV equation under the 
component reduction. 
At the same time, we propose a two\/-\/step scheme for the recursive
production of the integrals of motion for the $N{=}2$,\ $a{=}4$--\/SKdV.
First, we find a new Gardner's deformation of the Kaup\/--\/Boussinesq equation, which is contained in the bosonic limit of 
the super\/-\/
hierarchy. This yields the recurrence relation between the Hamiltonians of the limit, whence we determine the bosonic super\/-\/Hamiltonians of the full
$N{=}2$, $a{=}4$--\/SKdV hierarchy. 
Our method is applicable towards the solution of Gardner's deformation problems for other supersymmetric KdV\/-\/type systems.
\end{abstract}
\maketitle

\subsection*{Introduction}
This paper is devoted to the Korteweg\/--\/de Vries equation and its generalizations~\cite{Gardner}.
We consider completely integrable, multi\/-\/Hamiltonian
evolutionary $N{=}2$ supersymmetric equations
upon a scalar, complex bosonic $N{=}2$ superfield 
\begin{equation}\label{N2SuperField}
\bu(x,t;\theta_1,\theta_2)
=u_0(x,t)+\theta_1\cdot u_1(x,t)+\theta_2\cdot u_2(x,t)+\theta_1\theta_2
 \cdot u_{12}(x,t),
\end{equation}
where $\theta_1$ and~$\theta_2$ are Grassmann variables satisfying
$\theta_1^2=\theta_2^2=\theta_1\theta_2+\theta_2\theta_1=0$.
Also, we investigate one-{} and two\/-\/component reductions of such four\/-\/component $N{=}2$ super\/-\/systems upon~$\bu$. In particular, we study the 
bosonic limits, which are obtained by the constraint
\begin{equation}\label{BRed}
u_1=u_2\equiv0.
\end{equation}
We analyse the structures that are inherited by the limits from the full super\/-\/systems and, conversely, 
recover the integrability properties of the entire $N{=}2$ hierarchies from their bosonic counterparts.

We address 
$2^{\text{nd}}$ Open
problem of~\cite{MathieuOpen} 
for the $N{=}2$ supersymmetric Korteweg\/--\/de Vries equation with
$a{=}4$, see~\cite{MathieuTwo,MathieuNew}, 
\begin{equation}\label{SKdV}
\bu_t=-\bu_{xxx}+3\bigl(\bu\cD_1\cD_2\bu\bigr)_x
 +\frac{a-1}{2}\bigl(\cD_1\cD_2\bu^2\bigr)_x + 3a\bu^2\bu_x,\qquad
 \cD_i=\frac{\dd}{\dd\theta_i}+\theta_i\cdot\frac{\Id}{\Id x}.
\end{equation}
For $a{=}4$, this 
super\/-\/
equation possesses an infinite
hierarchy of bosonic Hamiltonian super\/-\/functionals~$\boldsymbol{\cH}^{(k)}$
whose densities~$\boldsymbol{h}^{(k)}$ 
are integrals of motion. 
The problem amounts to a recursive production of such densities by using those which are already obtained. In its authentic formulation,
the problem suggests finding a 
parametric family of super\/-\/equations~$\cE(\epsilon)$ 
upon the generating super\/-\/function 
$\tilde{\bu}(\epsilon)=\sum_{k=0}^{+\infty}\boldsymbol{h}^{(k)}\cdot\epsilon^k$
for the integrals of motion such that the initial super\/-\/equation~\eqref{SKdV} is~$\cE(0)$. 
It is further supposed that, at each~$\epsilon$, the evolutionary equation~$\cE(\epsilon)$ exprimes a (super-)\/conserved current, and there is the Gardner\/--\/Miura substitution $\gm_\epsilon\colon\cE(\epsilon)\to\cE(0)$. Hence, expanding~$\gm_\epsilon$ in~$\epsilon$ and using the initial condition 
$\tilde{\bu}(0)=\bu$ at~$\epsilon=0$, 
one obtains the differential recurrence relation between the Taylor coefficients~$\boldsymbol{h}^{(k)}$ of the generating function~$\tilde{\bu}$ 
(see~\cite{Gardner} or~\cite{PamukKale,TMPh2006,KuperIrish,MathieuNew} 
and references therein for 
details and examples).
The recurrence relations between the (super-)\/Hamiltonians of the hierarchy
are much more informative than the usual recursion operators that propagate symmetries. In particular, the symmetries can be used to produce new explicit solutions from known ones, but the integrals of motion 
help to find those primary solutions.

Let us also note that, 
within the Lax framework of super\/-\/pseudodifferential operators,
the calculation of the $(n+1)$-\/st residue does not take into account
the $n$~residues, which are already known at smaller indices.
This is why the method of Gardner's deformations becomes highly preferrable.
Indeed, there is no need to multiply any pseudodifferential operators 
by applying the Leibnitz rule an increasing number of times,
and all the previously obtained quantities are used at each inductive step.
By this argument, we understand Gardner's deformations as the transformation in the space of the integrals of motion that maps the residues to Taylor coefficients of the generating functions~$\tilde{\bu}(\epsilon)$ and which, therefore, endows this space with the additional structure (that is, with the recurrence relations between the integrals).

Still there is a deep intrinsic relation between the Lax (or, more
generally, zero\/-\/curvature) representations for integrable systems and 
Gardner's deformations for them. Namely, both approaches manifest the matrix
and vector field representations of 
the Lie algebras related to such
systems~\cite{WilsonEquiv}.

\smallskip
Our main result is the following. 
Under some natural assumptions,
we prove 
the non\/-\/existence of $N{=}2$ supersymmetry\/-\/invariant Gardner's deformations for the bi\/-\/Hamiltonian $N{=}2$,\ $a{=}4$--\/SKdV.
Still, we show that the Open problem must be addressed in a different way,
and then we solve it in two steps. 
First, in section~\ref{SecKBous} we 
recall that the tri\/-\/Hamiltonian hierarchy for
the bosonic limit of~\eqref{SKdV} with $a{=}4$ contains the
Kaup\/--\/Boussinesq equation, see~\cite{Kaup75,NutkuPavlov}
and~\cite{BrunelliDas94,KuperSuperLongWaves,PalitChowdhury96} in the 
context of this paper.
Then in section~\ref{SecBurg} we construct new 
deformations for the Kaup\/--\/Boussinesq
equation such that the Miura contraction~$\gm_\epsilon$ now incorporates
Gardner's map for the KdV equation (\cite{Gardner}, c.f.~\cite{PamukKale,KuperIrish}).
Second, extending the Hamiltonians~$H^{(k)}$ 
for the Kaup\/--\/Boussinesq hierarchy
to the super\/-\/functionals~$\boldsymbol{\cH}^{(k)}$ in section~\ref{SecHam},
we reproduce the bosonic conservation laws for~\eqref{SKdV} with~$a{=}4$.
Finally, we contribute to the solution of P.~Mathieu's $3^{\text{rd}}$ Open problem~\cite{MathieuOpen} with the description of necessary conditions upon a class of Gardner's deformations for~\eqref{SKdV} that reproduce its \emph{fermionic} local conserved densities.

\smallskip
The standard reference in geometry of completely integrable Hamiltonian partial differential equations is~\cite{Olver}.

\section{$N{=}2$ $a{=}4$--\/SKdV as bi\/-\/Hamiltonian super\/-\/extension of Kaup\/--\/Boussinesq system}\label{SecKBous}
\noindent%
Let us begin with the Korteweg\/--\/de Vries equation 
\begin{equation}\label{KdV}
u_{12;t}+u_{12;xxx}+6u_{12}u_{12;x}=0.
\end{equation}
Its 
second Hamiltonian operator, 
$\smash{\hat{A}_2^{\text{KdV}}} = \Id^3/\Id x^3+4u_{12}\,\Id/\Id x
+2u_{12;x}$,
which relates~\eqref{KdV} to the functional 
$H^{(2)}_{\text{KdV}}=-\tfrac{1}{2}\int u_{12}^2\,\Id x$, can be
extended%
\footnote{Likewise, 
   we will extend Gardner's deformation~\eqref{DefKdV}
   of~\eqref{KdV} to the deformation~\eqref{AppR}
   of the two\/-\/component bosonic limit~\eqref{BLim} for~\eqref{SKdV}
   with $a{=}4$. 
      Hence we reproduce the conservation laws
   for~\eqref{BLim} and, again, 
   extend them to the bosonic super\/-\/%
   Hamiltonians of the full 
   system~\eqref{SKdV}.}
in the $(2\mid2)$-\/graded field setup 
to the parity\/-\/preserving 
Hamiltonian 
operator~\cite{MathieuTwo},
\begin{equation}\label{SecondHam4x4}
\hat{P}_2=\begin{pmatrix}
-\tfrac{\Id}{\Id x} & -u_2 & u_1 & 2u_0\tfrac{\Id}{\Id x}+2u_{0;x} \\
-u_2 & \bigl(\tfrac{\Id}{\Id x}\bigr)^2+u_{12} & -2u_0\tfrac{\Id}{\Id x}-u_{0;x} & 3u_1\tfrac{\Id}{\Id x}+2u_{1;x} \\
u_1 & 2u_0\tfrac{\Id}{\Id x}+u_{0;x} & \bigl(\tfrac{\Id}{\Id x}\bigr)^2+u_{12}\vphantom{\Bigr)} & 3u_2\tfrac{\Id}{\Id x}+2u_{2;x} \\
2u_0\tfrac{\Id}{\Id x} & -3u_1\tfrac{\Id}{\Id x}-u_{1;x} & -3u_2\tfrac{\Id}{\Id x}-u_{2;x} & \underline{\bigl(\tfrac{\Id}{\Id x}\bigr)^3+4u_{12}\tfrac{\Id}{\Id x}+2u_{12;x}}
\end{pmatrix}.
\end{equation}
Here the fields $u_0$ and~$u_{12}$ are bosonic,
$u_1$~and $u_2$~are fermionic together with their derivatives w.r.t.~$x$.
Likewise, the components $\psi_0\simeq\delta\cH/\delta u_0$ and
$\psi_{12}\simeq\delta\cH/\delta u_{12}$ of the arguments $\vec{\psi}={}^t\bigl(\psi_0,\psi_1,\psi_2,\psi_{12}\bigr)$ of~\eqref{SecondHam4x4} are even\/-\/graded and $\psi_1$, $\psi_2$~are odd\/-\/graded.
The operator~\eqref{SecondHam4x4} is unique in the class of Hamiltonian total differential operators that merge to scalar $N{=}2$ super\/-\/operators 
which are local in $\cD_i$ and whose
coefficients depend on the super\/-\/field~$\bu$ and its super\/-\/derivatives, see~\eqref{SecondHamN=2} below.
The operator~
\eqref{SecondHam4x4} determines 
the 
$N{=}2$ classical super\/-\/conformal algebra~\cite{ChaichianKulish87}.
Conversely, the Poisson bracket given by~\eqref{SecondHam4x4}
reduces to the second Poisson bracket
for~\eqref{KdV}, whenever one sets equal to zero 
the fields~$u_0$,\ $u_1$,\ and~$u_2$ both in the coefficients of~\eqref{SecondHam4x4} and in all Hamiltonians; the operator $\hat{A}_2^{\text{KdV}}$ is underlined in~\eqref{SecondHam4x4}.

By construction, P.~Mathieu's extensions of the Korteweg\/--\/de Vries equation~\eqref{KdV} are determined by the operator~\eqref{SecondHam4x4} and the bosonic Hamiltonian functional
\begin{equation}\label{HamComponents}
\cH^{(2)}=\int\Bigl[u_0u_{0;xx}\underline{-u_{12}^2}+u_1u_{1;x}+u_2u_{2;x}+
   a\cdot\bigl(u_0^2u_{12}-2u_0u_1u_2\bigr)\Bigr]\,\Id x,
\end{equation}
which incorporates $H^{(2)}_{\text{KdV}}$ 
as the underlined term; similar to~\eqref{SecondHamN=2}, the Hamiltonian~\eqref{HamComponents} will be realized by~\eqref{BalanceHam} as the bosonic $N{=}2$ super\/-\/Hamiltonian. 
Now we have that
\[
u_{i;t}=\bigl(\hat{P}_2\bigr)_{ij}\bigl(\delta\cH^{(2)}/\delta u_j\bigr),
\qquad i,j\in\{0,1,2,12\}.
\]
This yields the system
\begin{subequations}\label{SKdVComponents}
\begin{align}
u_{0;t}&=-u_{0;xxx}+\bigl(a u_0^3
   -(a+2)u_0u_{12}+(a-1)u_1u_2\bigr)_x,
\label{GetmKdV}\\
u_{1;t}&=-u_{1;xxx}+\bigl(\phantom{+}(a+2)u_0u_{2;x}+(a-1)u_{0;x}u_2
   -3u_1u_{12}+3a u_0^2u_1 \bigr)_x,\\
u_{2;t}&=-u_{2;xxx}+\bigl(-(a+2)u_0u_{1;x}-(a-1)u_{0;x}u_1
   -3u_2u_{12}+3a u_0^2u_2 \bigr)_x,\\
\underline{u_{12;t}}&=\underline{-u_{12;xxx}-6u_{12}u_{12;x}}
 +3au_{0;x}u_{0;xx}+(a+2)u_0u_{0;xxx}\notag\\
 {}&{}\qquad{}+3u_1u_{1;xx}+3u_2u_{2;xx}
 +3a\bigl(u_0^2u_{12} -2u_0u_1u_2\bigr)_x.\label{GetKdV}
\end{align}
\end{subequations}
Obviously, it retracts to~\eqref{KdV}, which we underline in~\eqref{SKdVComponents}, under the reduction $u_0=0$,~$u_1=u_2=0$.

At all~$a\in\BBR$, the Hamiltonian~\eqref{HamComponents} equals
\begin{equation}\label{BalanceHam}
\boldsymbol{\cH}^{(2)}=\int\bigl(
   \boldsymbol{u}\cD_1\cD_2(\boldsymbol{u})+\tfrac{a}{3}\boldsymbol{u}^3
\bigr)\Id\boldsymbol{\theta}\Id x,\qquad 
\text{where $\Id\boldsymbol{\theta}=\Id\theta_1\Id\theta_2$.}
\end{equation}
Likewise, the structure~\eqref{SecondHam4x4}, which is independent of~$a$, 
produces the $N{=}2$ super\/-\/operator
\begin{equation}\label{SecondHamN=2}
\hat{\boldsymbol{P}}_2=
\cD_1\cD_2\tfrac{\Id}{\Id x}+2\bu\tfrac{\Id}{\Id x}-\cD_1(\bu)\cD_1-\cD_2(\bu)\cD_2+2\bu_x.
\end{equation}
Thus we recover P.~Mathieu's super\/-\/equations~\eqref{SKdV}~\cite{MathieuNew}, which are Hamiltonian with respect to~\eqref{SecondHamN=2}
and the functional~\eqref{BalanceHam}: 
$\bu_t=\hat{\boldsymbol{P}}_2\bigl(\tfrac{\delta}{\delta\bu}
 (\boldsymbol{\cH}_2)\bigr)$. 
In component notation, super\/-\/equations~\eqref{SKdV} are~\eqref{SKdVComponents}.

The assumption that, for a given~$a$, the super\/-\/system~\eqref{SKdV}
admits infinitely many integrals of motion
yields the triplet $a\in\{-2,1,4\}$, see~\cite{MathieuNew}.
The same values of~$a$ 
are exhibited by the Painlev\'e analysis for $N{=}2$ super\/-\/equations~\eqref{SKdV}, see~\cite{BourqueMathieu}.

The three systems~\eqref{SKdV} have the common second Poisson structure, 
which is given by~\eqref{SecondHamN=2}, but the three `junior' first
Hamiltonian operators~$\hat{\boldsymbol{P}}_1$ for them 
do not coincide~\cite{KerstenBiHamA1,MathieuTwo,MathieuNew}. 
Moreover, system~\eqref{SKdV} with $a{=}4$ is radically 
different from the other two, both from the Hamiltonian and Lax viewpoints.

\begin{state}
The $N{=}2$ supersymmetric hierarchy of P.~Mathieu's
$a{=}4$ Korteweg\/--\/de Vries equation is bi\/-\/Hamiltonian
with respect to the local super\/-\/operator~\eqref{SecondHamN=2} and
the junior Hamiltonian operator\footnote{The nonzero entries of the
$(4\times4)$-\/matrix representation $\hat{P}_1$ for 
the Hamiltonian super\/-\/operator $\hat{\boldsymbol{P}}_1^{a{=}4}$ 
are $\bigl(\hat{P}_1\bigr)_{0,12}=\bigl(\hat{P}_1\bigr)_{2,1}
 =\bigl(\hat{P}_1\bigr)_{12,0}=-\bigl(\hat{P}_1\bigr)_{1,2}=\Id/\Id x$.} 
$\hat{\boldsymbol{P}}{\mathstrut}_1^{a=4}={\Id}/{\Id x}$,
which is obtained from~$\hat{\boldsymbol{P}}{\mathstrut}_2^{a=4}$ by the shift
$\bu\mapsto\bu+\boldsymbol{\lambda}$ of the super\/-\/field~$\bu$, 
see~\textup{\cite{Magri2000-5,Artur}}:
\[
\hat{\boldsymbol{P}}{\mathstrut}_1^{a=4}=\frac{\Id}{\Id x}=\frac{1}{2}\cdot
 \frac{\Id}{\Id\boldsymbol{\lambda}}{\Bigr|}_{\boldsymbol{\lambda}=0}
  \hat{\boldsymbol{P}}{\mathstrut}_2^{a=4}{\Bigr|}_{\bu+\boldsymbol{\lambda}}.
\]   
The two operators are Poisson compatible and generate the tower of \emph{nonlocal} higher structures $\hat{\boldsymbol{P}}_{k+2}=\bigl(\hat{\boldsymbol{P}}_2\circ\hat{\boldsymbol{P}}_1^{-1}\bigr)^{k}\circ\hat{\boldsymbol{P}}_2$, $k\geq1$, for the $N{=}2$,\ $a{=}4$--\/SKdV hierarchy, see~\cite{YKSMagri,JKGolovko2008}. Although $\hat{\boldsymbol{P}}_3$ is nonlocal (c.f.~\cite{PalitChowdhury96}), 
its bosonic limits under~\eqref{BRed} yield the \emph{local} third Hamiltonian structure~$\hat{A}_{2}$ for the Kaup\/--\/Boussinesq equation, which determines the evolution along the second time~$t_2\equiv\xi$
in the bosonic limit of the $N{=}2$,\ $a{=}4$--\/SKdV hierarchy 
\textup{(}see Proposition~\textup{\ref{ThTriHam}} 
on p.~\textup{\pageref{ThTriHam})}.
\end{state}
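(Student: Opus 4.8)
The plan is to verify the three assertions in the Proposition essentially by direct computation, exploiting the structural inputs already assembled in this section. First I would establish bi\/-\/Hamiltonianity: the operator $\hat{\boldsymbol{P}}{\mathstrut}_2^{a=4}$ is known (from~\eqref{SecondHamN=2} and the cited~\cite{MathieuTwo}) to be Hamiltonian, and the shift $\bu\mapsto\bu+\boldsymbol{\lambda}$ is a point transformation, hence preserves the Hamiltonian property of the pencil $\hat{\boldsymbol{P}}{\mathstrut}_2^{a=4}{\bigr|}_{\bu+\boldsymbol{\lambda}}$; extracting the linear\/-\/in\/-\/$\boldsymbol{\lambda}$ term and checking that the $\boldsymbol{\lambda}$\/-\/independent and $\boldsymbol{\lambda}$\/-\/quadratic terms are $\hat{\boldsymbol{P}}{\mathstrut}_1^{a=4}={\Id}/{\Id x}$ (up to the factor $\tfrac12$) and $\hat{\boldsymbol{P}}{\mathstrut}_2^{a=4}$ respectively is a short calculation using the explicit form of~\eqref{SecondHamN=2}. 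Poisson compatibility of $\hat{\boldsymbol{P}}_1$ and $\hat{\boldsymbol{P}}_2$ then follows automatically from the pencil argument (any affine\/-\/in\/-\/$\boldsymbol{\lambda}$ Hamiltonian family yields a compatible pair), so no Schouten bracket needs to be computed by hand; alternatively one can cite~\cite{Magri2000-5,YKSMagri}. That the hierarchy is actually generated, i.e.\ that $\hat{\boldsymbol{P}}{\mathstrut}_1^{a=4}$ maps the seed Hamiltonian to the right\/-\/hand side of~\eqref{SKdV}, is checked by pairing $\Id/\Id x$ with $\delta\boldsymbol{\cH}^{(3)}/\delta\bu$ for the appropriate next Hamiltonian.

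Next I would address the recursion $\hat{\boldsymbol{P}}_{k+2}=(\hat{\boldsymbol{P}}_2\circ\hat{\boldsymbol{P}}_1^{-1})^{k}\circ\hat{\boldsymbol{P}}_2$. Since $\hat{\boldsymbol{P}}_1^{-1}=\Id^{-1}/\Id x$ is a formal integration, nonlocality of the composite for $k\geq1$ is manifest; the point worth recording is that the nonlocal tails are consistently defined on the intersection of domains of the hierarchy's flows (this is the standard Magri\/--\/Lenard setup, and I would refer to~\cite{YKSMagri,JKGolovko2008} rather than reprove it). The one genuinely computational assertion is that the \emph{bosonic limit} of $\hat{\boldsymbol{P}}_3$ under~\eqref{BRed} is local and equals the third Hamiltonian operator $\hat{A}_2$ of the Kaup\/--\/Boussinesq system. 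Here I would write $\hat{\boldsymbol{P}}_3=\hat{\boldsymbol{P}}_2\circ(\Id/\Id x)^{-1}\circ\hat{\boldsymbol{P}}_2$ in the $(4\times4)$ component form, substitute $u_1=u_2\equiv0$, and observe that the fermionic rows and columns decouple, leaving a $(2\times2)$ block in $(u_0,u_{12})$; the nonlocal operator $(\Id/\Id x)^{-1}$ has to cancel between the two factors of $\hat{P}_2$, and verifying this cancellation — together with matching the surviving differential operator to the known $\hat{A}_2^{\text{KB}}$ — is the one place where the algebra is not automatic.

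I expect the main obstacle to be exactly that last cancellation of nonlocality in the bosonic limit: a priori $\hat{\boldsymbol{P}}_2\circ\partial_x^{-1}\circ\hat{\boldsymbol{P}}_2$ is a nonlocal operator, and one must show that after imposing~\eqref{BRed} all the $\partial_x^{-1}$ terms assemble into total derivatives and drop out. The cleanest route is not to expand blindly but to use the tri\/-\/Hamiltonian structure of the Kaup\/--\/Boussinesq hierarchy that is recalled in Proposition~\ref{ThTriHam}: one checks that the bosonic reduction of $\hat{\boldsymbol{P}}_2$ is $\hat{A}_1^{\text{KB}}$, that of $\hat{\boldsymbol{P}}_1$ is $\hat{A}_0^{\text{KB}}=\Id/\Id x$, and then identifies $\hat{A}_1(\hat{A}_0)^{-1}\hat{A}_1$ with the already\/-\/local $\hat{A}_2^{\text{KB}}$ by the recursion operator $\hat{A}_1(\hat A_0)^{-1}$ of the Kaup\/--\/Boussinesq hierarchy, whose locality is classical~\cite{Kaup75,NutkuPavlov}. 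The identification of the resulting flow with the second\/-\/time evolution $\bu_\xi$ along $t_2\equiv\xi$ then follows by pairing $\hat{A}_2^{\text{KB}}$ with the Kaup\/--\/Boussinesq Hamiltonian density of the correct weight, completing the proof; I would defer the explicit formulae to section~\ref{SecKBous} where the Kaup\/--\/Boussinesq data are spelled out.
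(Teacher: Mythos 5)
Your proposal is correct and follows essentially the same route the paper takes: the constant shift $\bu\mapsto\bu+\boldsymbol{\lambda}$ producing an affine Hamiltonian pencil (whence the compatible pair, as in~\cite{Magri2000-5,Artur}), the standard Magri--Lenard tower for the nonlocal higher structures, and the component computation of the bosonic limit of $\hat{\boldsymbol{P}}_2\circ\hat{\boldsymbol{P}}_1^{-1}\circ\hat{\boldsymbol{P}}_2$, which the paper delegates to Proposition~\ref{ThTriHam}. One small slip: the shifted pencil $\hat{\boldsymbol{P}}{\mathstrut}_2^{a=4}\bigr|_{\bu+\boldsymbol{\lambda}}=\hat{\boldsymbol{P}}_2+2\boldsymbol{\lambda}\,\Id/\Id x$ is \emph{affine} in $\boldsymbol{\lambda}$ (the $\boldsymbol{\lambda}$-independent part is $\hat{\boldsymbol{P}}_2$ and there is no quadratic term), so your phrase about the ``$\boldsymbol{\lambda}$-quadratic term'' should be dropped, but this does not affect the argument.
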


\begin{rem}
The Kaup\/--\/Boussinesq system~\cite{Kaup75} arising here is equivalent to the Kaup\/--\/Broer system (the difference amounts to notation). 
A bi\/-\/Hamiltonian $N{=}2$ super\/-\/extension of the latter is known 
from~\cite{KuperSuperLongWaves}.
A tri\/-\/Hamiltonian two\/-\/fermion $N{=}1$ super\/-\/extension of 
the Kaup\/-\/Broer system was constructed in~\cite{BrunelliDas94} such that
in the bosonic limit the three known Hamiltonian structures for the 
initial system are recovered.    
At the same time, a boson\/-\/fermion $N{=}1$ super\/-\/extension of the Kaup\/--\/Broer equation with two local and the nonlocal third Hamiltonian structures
was derived in~\cite{PalitChowdhury96}; seemingly, the latter 
equaled the composition 
$\hat{\boldsymbol{P}}_2\circ\hat{\boldsymbol{P}}_1^{-1}\circ
\hat{\boldsymbol{P}}_2$, but 
it 
remained to prove that the suggested nonlocal super\/-\/operator is skew\/-\/adjoint, that the bracket induced on the space of bosonic super\/-\/Hamiltonians does satisfy the Jacobi identity, and that the hierarchy flows produced by the nonlocal operator remain local.
\end{rem}

There is a deep reason for the geometry of the $a{=}4$--\/SKdV 
to be exceptionally rich. 
All the three integrable $N{=}2$ supersymmetric KdV equations~\eqref{SKdV}
admit the Lax representations 
   $L_{t_3}=[A^{(3)}
,L]$, see~\cite{BonoraKrivonosSorin,MathieuTwo,MathieuOpen,PopowiczLax}.
For $a{=}4$, 
the four roots of the Lax operator~$L_{a{=}4}=-(\cD_1\cD_2+\bu)^2$, 
which are $\cL_{1,\pm}=\pm\rmi(\cD_1\cD_2+\bu)$, $\rmi^2=-1$,
and the super\/-\/pseudodifferential operators $\cL_{2,\pm}=\pm\tfrac{\Id}{\Id x}
+ 
\sum_{i>0} (\cdots)\cdot\bigl(\tfrac{\Id}{\Id x}\bigr)^{-i}$,
generate the odd\/-\/index flows of the SKdV hierarchy via 
$L_{t_{2k+1}}=[(\cL_2^{2k+1})_{\geq0},L]$.
In particular, 
we have $A_{a{=}4}^{(3)}=\bigl(L^{3/2}\bigr)_{\geq0}\mod(\cD_1\cD_2+\bu)^3$.
However, 
the \emph{entire} $a{=}4$
hierarchy is reproduced in the Lax form via $(\cL_1^k\cL_2)_{t_\ell}=
 \bigl[\bigl(\cL_1^\ell\cL_2\bigr)_{\geq0}, \cL_1^k\cL_2\bigr]$ 
for all~$k\in\BBN$, c.f.~\cite{KrivonosSorinToppan}. 
Hence the super\/-\/residues\footnote{We recall that the $N{=}2$
super\/-\/residue $\Sres M$ of a super\/-\/pseudodifferential operator~$M$
is the coefficient of $\cD_1\cD_2\circ\bigl(\tfrac{\Id}{\Id x}\bigr)^{-1}$
in~$M$%
.}
of the operators $\cL_1^k\cL_2$ are conserved. 

Consequently, unlike the other two, 
super\/-\/equation~\eqref{SKdV} with $a{=}4$ admits twice as many
constants of motion as there are for the 
super\/-\/equations with $a{=}-2$ or~$a{=}1$. 
For convenience, let us recall that super\/-\/equations~\eqref{SKdV}
are homogeneous with respect to the weights $|\Id/\Id x|\equiv1$, $|\bu|=1$,
$|\Id/\Id t|=3$. Hence we conclude that, 
for each nonnegative integer~$k$, there appears the
nontrivial conserved density~$\Sres\cL_1^k\cL_2$, see above,
of weight~$k+1$. The even weights also enter the play. 
Consequently, there are twice as many commuting super\/-\/flows
assigned to the twice as many Hamiltonians.

\begin{example}
The additional 
super\/-\/Hamiltonian 
$\boldsymbol{\cH}^{(1)}=\tfrac{1}{2}\int\bu^2\Id\boldsymbol{\theta}\Id x$
for~\eqref{SKdV} with $a{=}4$, and the second structure~\eqref{SecondHamN=2},
---~or, equivalently, the first operator $\hat{\boldsymbol{P}}_1=\Id/\Id x$
and the Hamiltonian~$\boldsymbol{\cH}^{(2)}$, or~$\hat{\boldsymbol{P}}_3$ and 
$\boldsymbol{\cH}^{(0)}=\int\bu\,\Id\boldsymbol{\theta}\Id x$, see above,~--- generate
the $N{=}2$ supersymmetric equation
\begin{equation}\label{Burg}
\bu_\xi=\cD_1\cD_2\bu_x+4\bu\bu_x=
  \hat{\boldsymbol{P}}_3\left(\frac{\delta}{\delta\bu}(\boldsymbol{\cH}^{(0)})\right) 
= \hat{\boldsymbol{P}}_2\left(\frac{\delta}{\delta\bu}(\boldsymbol{\cH}^{(1)})\right) 
= \hat{\boldsymbol{P}}_1\left(\frac{\delta}{\delta\bu}(\boldsymbol{\cH}^{(2)})\right),\quad \xi\equiv t_2.
\end{equation}
Super\/-\/equation~\eqref{Burg} was referred to as the $N{=}2$ `Burgers' equation
in~\cite{N=2Hirota,Kiev2005} due to the recovery of 
$\bu_\xi=\bu_{xx}+4\bu\bu_x$ on the diagonal~$\theta_1=\theta_2$. On the other hand, the bosonic limit of~\eqref{Burg} is the tri\/-\/Hamiltonian `minus' Kaup\/--\/Boussinesq system (see~\cite{Kaup75} or 
\cite{PamukKale,KuperIrish,NutkuPavlov} and references therein)
\begin{equation}\label{BLimBurg}
u_{0;\xi}=\bigl(-u_{12}+2u_0^2\bigr)_x,\qquad
u_{12;\xi}=\bigl(u_{0;xx}+4u_0u_{12}\bigr)_x.
\end{equation}
System~\eqref{BLimBurg} is equivalent to the Kaup\/--\/Broer equation via an invertible substitution.
In these terms, super\/-\/equation~\eqref{Burg} is a 
super\/-\/extension of the Kaup\/--\/Boussinesq system~\cite{BrunelliDas94,KuperSuperLongWaves,PalitChowdhury96}. 
In their turn, 
the first three Poisson structures for~\eqref{SKdV} with $a{=}4$ are 
reduced under~\eqref{BRed} to the respective \emph{local} structures for~\eqref{BLimBurg},
see Proposition~\textup{\ref{ThTriHam}} on p.~\textup{\pageref{ThTriHam}}.
\end{example}

Our interest in the recursive production of the integrals of motion
for~\eqref{SKdV} grew after the discovery, 
see~\cite{N=2Hirota}, of new $n$-\/soliton solutions,
\begin{equation}\label{HirotaForm}
\bu=\mathsf{A}(a)\cdot \cD_1\cD_2\log\Bigl(1+\sum_{i=1}^n \alpha_i\exp
 \bigl(k_ix-k_i^3\cdot t\pm\rmi\,k_i\cdot\theta_1\theta_2\bigr)\Bigr),
\qquad
\mathsf{A}(a)=\left\{\begin{matrix}1,\ a{=}1,\\ \tfrac{1}{2},\ a{=}4,
\end{matrix}\right.
\end{equation}
for the 
super\/-\/equations~\eqref{SKdV} with $a{=}1$ or $a{=}4$ (but not $a{=}-2$
or any other $a\in\BBR\setminus\{1,4\}$).
In formula~\eqref{HirotaForm}, the wave numbers~$k_i\in\BBR$ are
arbitrary, and the phases~$\alpha_i$ can be rescaled to $+1$ for
non\/-\/singular $n$-\/soliton solutions 
by appropriate shifts of $n$~higher
times in the SKdV hierarchy. A spontaneous decay of fast solitons and
their transition into the virtual states, on the emerging background of
previously invisible, slow solitons, look paradoxal for such 
KdV\/-\/type systems ($a{=}1$ or $a{=}4$), since they possess an infinity of 
the integrals of motion.

The new solutions~\eqref{HirotaForm} of~\eqref{SKdV} with $a{=}1$ or $a{=}4$
are subject to the condition~\eqref{BRed}
and therefore
satisfy the bosonic limits of these $N{=}2$ super\/-\/systems.
In the same way, 
the bosonic limit~\eqref{BLimBurg} of~\eqref{Burg} admits 
multi\/-\/soliton solutions in Hirota's form~\eqref{HirotaForm}, now with the
exponents $\eta_i=k_ix\pm\rmi k_i^2\xi\pm\rmi k_i\theta_1\theta_2$, 
see~\cite{N=2Hirota}.
This makes the role of such two\/-\/component bosonic reductions particularly important. 
We recall that the reduction~\eqref{BRed} of~\eqref{SKdV} with~$a{=}1$
yields the Kersten\/--\/Kra\-sil'\-shchik equation,
see~\cite{JKKerstenEq} or~\cite{N=2Hirota} and references therein.
In this paper, we consider the bosonic limit of the $N{=}2$, $a{=}4$~SKdV 
equation,
\begin{subequations}\label{BLim}
\begin{align}
u_{0;t}&=-u_{0;xxx}+12 u_0^2 u_{0;x}
   -6\bigl(u_0u_{12}\bigr)_x,\\
u_{12;t}&=-u_{12;xxx}-6u_{12}u_{12;x}
 +12 u_{0;x}u_{0;xx}+6u_0u_{0;xxx}
 +12\bigl(u_0^2u_{12}\bigr)_x,
\end{align}
\end{subequations}
which succeeds the Kaup\/--\/Boussinesq equation~\eqref{BLimBurg} 
in its tri\/-\/Hamiltonian hierarchy. 
We construct a new Gardner deformation for~it (c.f.~\cite{PamukKale}).

In general, system~\eqref{SKdVComponents} with~$a{=}4$ admits three one\/-\/component reductions (except $u_0\not\equiv0$) and three two\/-\/component reductions, which are indicated by the edges that connect the remaining components in the diagram
\[
\begin{CD}
{} @. u_0 @. {} \\
@.    @|     @. \\
u_1 @= u_{12} @= u_2.
\end{CD}
\]
System~\eqref{SKdVComponents} with~$a{=}4$ has no three\/-\/component reductions obtained by setting to zero only one of the four fields 
in~\eqref{N2SuperField}.
We conclude this paper by presenting 
a Gardner deformation for the two\/-\/component boson\/-\/fermion 
reduction~$u_0\equiv0$, $u_2\equiv0$   
of the $N{=}2$,\ $a{=}4$--\/SKdV system, see~\eqref{DefFB} on p.~\pageref{DefFB}.

\section{Deformation problem for $N{=}2$, $a{=}4$--\/SKdV equation}\label{SecGardner}
\noindent%
In this section, we formulate the two\/-\/step 
algorithm for a recursive production of 
the bosonic super\/-\/Hamiltonians~$\boldsymbol{\cH}^{(k)}[\bu]$
for the $N{=}2$ supersymmetric $a{=}4$--\/SKdV hierarchy.
Essentially, we convert the geometric problem to an explicit computational procedure. Our scheme can be applied to other KdV\/-\/type super\/-\/systems (in particular, to~\eqref{SKdV} with $a{=}-2$ or~$a{=}1$).

By definition, a classical Gardner's deformation
for an integrable evolutionary equation~$\cE$ is the diagram
\[\gm_\epsilon\colon\cE(\epsilon)\to\cE,\] where the equation~$\cE(\epsilon)$
is a parametric extension of the initial system $\cE=\cE(0)$ and
$\gm_\epsilon$ is the Miura contraction~\cite{Gardner,KuperIrish,TMPh2006}.
Under the assumption that $\cE(\epsilon)$ be in the form
of a (super-)\/conserved current, the Taylor
coefficients $\tilde{\bu}^{(k)}$ of the formal power series
$\tilde{\bu}=\sum_{k=0}^{+\infty}\tilde{\bu}^{(k)}\cdot\epsilon^k$ are
termwise conserved on~$\cE(\epsilon)$ and hence on~$\cE$. Therefore,
the contraction~$\gm_\epsilon$ yields the recurrence relations, ordered
by the powers of~$\epsilon$, between these densities~$\tilde{\bu}^{(k)}$,
while the equality $\cE(0)=\cE$ specifies its initial condition.

\begin{example}[\textup{\textmd{\cite{Gardner}}}]\label{ExKdVe}
The contraction
\begin{subequations}\label{DefKdV}
\begin{align}
\gm_\epsilon&=\smash{\bigl\{u_{12}=\tu_{12}
   \pm\epsilon\tu_{12;x}-\epsilon^2\tu_{12}^2\bigr\}}\label{KdVeKdV}\\
\intertext{maps solutions $\tu_{12}(x,t;\epsilon)$ of
the extended equation~$\cE(\epsilon)$,}
\tu_{12;t}&+
\bigl(\tu_{12;xx}+3\tu_{12}^2
   -2\epsilon^2\cdot\tu_{12}^3\bigr)_x=0,\label{KdVe}
\end{align}
\end{subequations}
to solutions $u_{12}(x,t)$ of the Korteweg\/--\/de Vries equation~
\eqref{KdV}.
Plugging the 
series $\tu_{12}=\sum_{k=0}^{+\infty}u_{12}^{(k)}\cdot\epsilon^k$
in~$\gm_\epsilon$ for~$\tu_{12}$, we obtain the chain of equations ordered by the powers of~$\epsilon$,
\[
u_{12}=\sum_{k=0}^{+\infty}\tu_{12}^{(k)}\cdot\epsilon^k 
  \pm \tu_{12;x}^{(k)}\cdot\epsilon^{k+1}
   -\sum_{\substack{i+j=k\\ i,j\geq0}} \tu_{12}^{(i)}\tu_{12}^{(j)}\cdot
    \epsilon^{k+2}.
\]
Let us fix the plus sign in~\eqref{KdVeKdV} by reversing $\epsilon\to-\epsilon$ if necessary.
Equating the co\-ef\-fi\-ci\-ents
of
~$\epsilon^k$, we obtain the relations
\[
u=\tu_{12}^{(0)},\qquad
0=\tu_{12}^{(1)}+\tu_{12;x}^{(0)},\qquad
0=\tu_{12}^{(k)}+\tu_{12;x}^{(k-1)}-\sum_{\substack{i+j=k-2\\ i,j\geq0}}
   \tu_{12}^{(i)}\tu_{12}^{(j)},\quad k\geq2.
\]
Hence, from the initial condition $\tu_{12}^{(0)}= u_{12}$,
we recursively generate the densities
\begin{align*}
\tu_{12}^{(1)} &= - u_{12;x},\qquad
\tu_{12}^{(2)} = u_{12;xx} - u_{12}^2,\qquad 
\tu_{12}^{(3)} = - u_{12;xxx} + 4u_{12;x}u_{12},\\
\tu_{12}^{(4)} &= u_{12;4x} - 6u_{12;xx}u_{12} - 5u_{12;x}^2 + 2u_{12}^3,\\
\tu_{12}^{(5)} &= - u_{12;5x} + 8u_{12;xxx}u_{12} + 18u_{12;xx}u_{12;x} - 16u_{12;x}u_{12}^2,\\
\tu_{12}^{(6)} &= u_{12;6x} - 10u_{12;4x}u_{12} - 28u_{12;xxx}u_{12;x} - 19u_{12;xx}^2 + 30u_{12;xx}u_{12}^2 + 50u_{12;x}^2u_{12} - 5u_{12}^4,\\
\tu_{12}^{(7)} &= - u_{12;7x} + 12u_{12;5x}u_{12} + 40u_{12;4x}u_{12;x} + 68u_{12;xxx}u_{12;xx} - 48u_{12;xxx}u_{12}^2\\
 {}&{}\qquad{} - 216u_{12;xx}u_{12;x}u_{12} - 60u_{12;x}^3 + 64u_{12;x}u_{12}^3,\quad \text{\textit{etc}}.
\end{align*}
The conservation $\tu_{12;t}=\tfrac{\Id}{\Id x}\bigl(\cdot\bigr)$ implies
that each coefficient~$u_{12}^{(k)}$ is conserved on~\eqref{KdV}.

The densities $u_{12}^{(2k)}=c(k)\cdot u_{12}^k+\dots$, $c(k)=\text{const}$,
determine the Hamiltonians~$\cH_{12}^{(k)}=\int h_{12}^{(k)}[u_{12}]\,\Id x$
of the renowned KdV hierarchy.
Let us show that all of them are nontrivial.
Consider the zero\/-\/order part~$\breve{u}_{12}^{\text{KdV}}$ such that 
$\tu_{12}\bigl([u_{12}],\epsilon\bigr)=\breve{u}_{12}^{\text{KdV}}(u_{12},\epsilon)+\dots$,
where the dots denote summands containing derivatives of~$u_{12}$. 
Taking the zero\/-\/order component of~\eqref{KdVeKdV},
we conclude that the generating function~$\breve{u}_{12}^{\text{KdV}}$ satisfies the algebraic recurrence relation $u_{12}=\breve{u}_{12}^{\text{KdV}}-\epsilon^2\bigl(\breve{u}_{12}^{\text{KdV}}\bigr)^2$. 
We choose the root by the initial condition
$\breve{u}_{12}^{\text{KdV}}{\bigr|}_{\epsilon=0}=u_{12}$, which yields
\begin{equation}\label{GenFnKdV}
\breve{u}_{12}^{\text{KdV}}=\Bigl( 1-\sqrt{1-4\epsilon^2 u_{12}}
   \bigr)\bigr/(2\epsilon^2).
\end{equation}
Moreover, 
the Taylor coefficients $\breve{u}_{12}^{(k)}(u_{12})$
in $\breve{u}_{12}^{\text{KdV}}=\sum_{k=0}^{+\infty}\breve{u}_{12}^{(k)}\cdot
\epsilon^{2k}$ equal $c(k)\cdot u_{12}^{k+1}$, where $c(k)
$ are positive and grow with~$k$. This is readily seen by induction over~$k$ with the base $\breve{u}_{12}^{(0)}=u_{12}$. Expanding both sides of the equality $u_{12}=\breve{u}_{12}^{\text{KdV}}-\epsilon^2\cdot\bigl(\breve{u}_{12}^{\text{KdV}}\bigr)^2$ in~$\epsilon^2$, we notice that
\[
\breve{u}_{12}^{(k)}=\sum_{\substack{i+j=k-1,\\ i,j\geq0}}
   \breve{u}_{12}^{(i)}\cdot\breve{u}_{12}^{(j)}
 =\sum_{i+j=k-1} c(i)c(j)\cdot u_{12}^{k+1}.
\]
Therefore, the next coefficient, $c(k)=\sum_{i+j=k-1}c(i)\cdot c(j)$,
is the sum over $i,j\geq0$ of products of positive numbers,
whence $c(k+1)>c(k)>0$. This proves the claim.

Let us list the densities 
$h_{\text{KdV}}^{(k)}\sim u_{12}^{(2k)}\mod\img\Id/\Id x$
of the first seven Hamiltonians for~\eqref{KdV}. These will be correlated 
in section~\ref{SecHam} with the lowest seven Hamiltonians for~\eqref{SKdV},
see~\cite{MathieuNew} and~\eqref{Hamsa} below. We have
\begin{align*} 
h_{\text{KdV}}^{(1)} &=   u_{12}^2, \qquad
h_{\text{KdV}}^{(2)} =  2u_{12}^3 - u_{12;x}^2 + 
   2u_{12}^3 + u_{12;xx}, \qquad
h_{\text{KdV}}^{(3)} =  5u_{12}^4 +  5u_{12;xx}u_{12}^2 + u_{12;xx}^2,  \\
h_{\text{KdV}}^{(4)} &=  - 14u_{12}^5 + 70u_{12}^2u_{12;x}^2 + 14u_{12}u_{12;xxx}u_{12;x} + u_{12;xxx}^2,  \\
h_{\text{KdV}}^{(5)} &=  42u_{12}^6 - 420u_{12}^3u_{12;x}^2 + 9u_{12}^2u_{12;6x} + 126u_{12}^2u_{12;xx}^2 + u_{12;4x}^2 - 7u_{12;xx}^3 - 35u_{12;x}^4, \\
h_{\text{KdV}}^{(6)} &=  1056u_{12}^7 - 18480u_{12}^4u_{12;x}^2 + 7392u_{12}^3u_{12;xx}^2 + 55u_{12}^2u_{12;8x} - 1584u_{12}^2u_{12;xxx}^2\\
{}&{}\qquad{} + 66u_{12}u_{12;4x}^2 + 3520u_{12}u_{12;xx}^3 
 - 6160u_{12}u_{12;x}^4 - 8u_{12;5x}^2 + 3696u_{12;xx}^2u_{12;x}^2, \\
h_{\text{KdV}}^{(7)} &=  15444u_{12}^8 - 432432u_{12}^5u_{12;x}^2 + 4004u_{12}^4u_{12;6x} + 216216u_{12}^4u_{12;xx}^2 + 2145u_{12}^3u_{12;8x} \\
{}&{}\qquad - 45760u_{12}^3u_{12;xxx}^2 
 + 3861u_{12}^2u_{12;4x}^2 + 133848u_{12}^2u_{12;xx}^3 - 360360u_{12}^2u_{12;x}^4 \\
{}&{}\qquad{}- 936u_{12}u_{12;5x}^2 + 36u_{12;6x}^2 + 6552u_{12;4x}^2u_{12;xx} 
 + 72072u_{12;xxx}^2u_{12;x}^2 - 28314u_{12;xx}^4.
\end{align*}
At the same time, the densities
$u_{12}^{(2k+1)}=\tfrac{d}{dx}\bigl(\cdot\bigr)\sim0$ are trivial. Indeed, for
$\omega_0\mathrel{{:}{=}}\sum_{k=0}^{+\infty}u_{12}^{(2k)}\cdot\epsilon^{2k}$
and
$\omega_1\mathrel{{:}{=}}\sum_{k=0}^{+\infty}u_{12}^{(2k+1)}\cdot\epsilon^{2k}$
such that $\tu=\omega_0+\epsilon\cdot\omega_1$,
we equate the odd powers of~$\epsilon$ in~\eqref{KdVeKdV} and
obtain $\omega_1=\tfrac{1}{2\epsilon^2}\tfrac{d}{dx}
\log\bigl(1-2\epsilon^2\omega_0\bigr)$.

In what follows, using the deformation
~\eqref{DefKdV} of~\eqref{KdV}, we fix the 
coefficients of differential monomials in~$u_{12}$ within a bigger 
deformation problem (see section~\ref{SecBurg}) for the two\/-\/component
system~\eqref{BLim}.
\end{example}

We split the Gardner deformation problem for the $N{=}2$ supersymmetric
hierarchy of~\eqref{SKdV} with $a{=}4$ 
in two main and several auxiliary 
steps.

First, we note that Miura's contraction
$\gm_\epsilon\colon\cE(\epsilon)\to\cE$, which encodes the recurrence
relation between the conserved densities, is common for all
equations 
of the hierarchy. Indeed, the densities (and hence any differential
relations between them) are shared by all the equations. Therefore, we
pass to the deformation problem for the $N{=}2$ super\/-\/Burgers
equation~\eqref{Burg}. This makes the first simplification of the 
Gardner deformation problem for the $N{=}2$,\ $a{=}4$ super\/-\/KdV hierarchy.

Second, let $\bh^{(k)}$ be an $N{=}2$ super\/-\/conserved density for an evolutionary super\/-\/equation~$\cE$, meaning that 
its velocity w.r.t.\ a time~$\tau$, $\tfrac{\Id}{\Id\tau}{\bh}^{(k)}=
\cD_1(\dots)+\cD_2(\dots)$, is a total divergence on~$\cE$. 
By definition of~$\cD_i$, see~\eqref{SKdV},
the $\theta_1\theta_2$-\/component $h_{12}^{(k)}$ of such
$\bh^{(k)}=h_0^{(k)}+\theta_1\cdot h_1^{(k)}+\theta_2\cdot h_2^{(k)}+
\theta_1\theta_2\cdot h_{12}^{(k)}$ is conserved in the classical 
sense, $\tfrac{\Id}{\Id\tau}{h}_{12}^{(k)}=
\tfrac{\Id}{\Id x}(\dots)$ on~$\cE$. Let us consider the correlation between the conservation laws for the full $N{=}2$ super\/-\/system~$\cE$ and for its reductions that are obtained by setting certain component(s) of~$\bu$ to zero.
In what follows, we study the bosonic reduction~\eqref{BRed}.
Other reductions of the super\/-\/equation~\eqref{SKdV} are discussed in section~\ref{SecHam}, see~\eqref{FB} on p.~\pageref{FB}.

We suppose that the bosonic limit $\lim_B\cE$ of the super\/-\/equation~$\cE$ exists, which is the case for~\eqref{SKdV} and~\eqref{Burg}.
By the above, each conserved super\/-\/density $\bh^{(k)}[\bu]$ 
determines the conserved density $h_{12}^{(k)}[u_0,u_{12}]$, which may become trivial. As in~\cite{BonoraKrivonosSorin}, we assume that the super\/-\/system~$\cE$ does not admit any 
conserved super\/-\/densities that vanish under the reduction~\eqref{BRed}.
Then, for such $h_{12}^{(k)}$ that originates from $\bh^{(k)}$ by construction,
the equivalence class $\{\bh^{(k)}\mod\img\cD_i\}$ is uniquely determined by 
\[
\int h_{12}^{(k)}[u_0,u_{12}]\,\Id x=\int\bh^{(k)}[\bu]{\bigr|}_{u_1=u_2=0}\Id\boldsymbol{\theta}\Id x,\qquad\text{here $N{=}2$ and
$\Id\boldsymbol{\theta}=\Id\theta_1\Id\theta_2$.}
\]
Berezin's 
definition of a super\/-\/integration, $\int\Id\theta_i=0$ and $\int\theta_i\,\Id\theta_i=1$, implies that the problem of recursive generation of the $N{=}2$ super\/-\/Hamiltonians $\boldsymbol{\cH}^{(k)}=\int\bh^{(k)}\,\Id\boldsymbol{\theta}\Id x$ for the SKdV hierarchy amounts to the generation of the equivalence classes $\int h_{12}^{(k)}\,\Id x$ for the respective $\theta_1\theta_2$-\/component. We conclude that a solution of Gardner's deformation problem for the supersymmetric system~\eqref{Burg} may not be subject to the supersymmetry invariance. This is a key point to further reasonings.

We stress that the equivalence class 
of such functions~$h_{12}^{(k)}[u_0,u_{12}]$ that originate 
from~$\boldsymbol{\cH}^{(k)}$ by~\eqref{BRed} is, generally, much more narrow than the equivalence class $\{h_{12}^{(k)}\mod\img \Id/\Id x\}$ of all conserved densities for the bosonic limit~$\lim_B\cE$. Obviously, there are differential functions of the form 
$\tfrac{\Id}{\Id x}\bigl(f[u_0,u_{12}]\bigr)$ 
that can not be obtained\footnote{Under the assumption of weight homogeneity, the freedom in the choice of such $f[u_0,u_{12}]$ is descreased,
but the gap still remains.}
as the $\theta_1\theta_2$-\/component of any
$\bigl[\cD_1(\cdot)+\cD_2(\cdot)\bigr]{\bigr|}_{u_1=u_2=0}$, which is trivial in the super\/-\/sense. Therefore, let $h_{12}^{(k)}$ be \emph{any} recursively given sequence of integrals of motion for~$\lim_B\cE$ 
(\textit{e.g.}, suppose that they are the densities of the Hamiltonians~$\cH^{(k)}$ for the hierarchy of~$\lim_B\cE$), and let it be known that 
each~$\cH^{(k)}=\int h_{12}^{(k)}\,\Id x$ does correspond to the super\/-\/analogue $\boldsymbol{\cH}^{(k)}=\int\bh^{(k)}\,\Id\boldsymbol{\theta}\Id x$.
Then the reconstruction of~$\bh^{(k)}$ requires an intermediate step, which is the elimination of excessive, 
homologically trivial terms under $\Id/\Id x$ that preclude a given $h_{12}^{(k)}$ to be extended to the full super\/-\/density in terms of the $N{=}2$ super\/-\/field~$\bu$. This is illustrated in section~\ref{SecHam}.

Thirdly, the gap between the two types of equivalence for the integrals of motion 
manifests the distinction between the deformations 
$\bigl(\lim_B\cE\bigr)(\epsilon)$ of bosonic limits and, on the other hand,
the bosonic limits $\lim_B\cE(\epsilon)$ of $N{=}2$ 
super\/-\/deformations.
The two operations, Gardner's extension of~$\cE$ to~$\cE(\epsilon)$ and taking the bosonic limit $\lim_B\cF$ of an equation~$\cF$, are not permutable. The resulting systems can be different.
Namely, according to the classical scheme (\cite{Gardner}, 
\cite{TMPh2006}), \emph{each} equation in the evolutionary system
$\bigl(\lim_B\cE\bigr)(\epsilon)$ represents a conserved current, whence each Taylor coefficient of the respective field is conserved, 
see Example~\ref{ExKdVe}. 
At the same time, for $\lim_B\cE(\epsilon)$, the conservation is required only for the field~$\tu_{12}(\epsilon)$, which is the $\theta_1\theta_2$-\/component of the extended super\/-\/field~$\tilde{\bu}(\epsilon)$. Other equations in $\lim_B\cE(\epsilon)$ can have any form.\footnote{Still, the four
components of the
original $N{=}2$ supersymmetric equations within the hierarchy of~\eqref{SKdV}
\emph{are} written in the form of conserved currents. 
A helpful counter\/-\/example, Gardner's extension of 
the $N=1$ super\/-\/KdV equation, is discussed
in~\cite{MathieuNew,MathieuN=1}.}

In this notation, we strengthen the problem of recursive generation of the super\/-\/Hamiltonians for the $N{=}2$ super\/-\/equation~\eqref{Burg}.
Namely, in section~\ref{SecBurg} we construct true
Gardner's deformations for its two\/-\/component bosonic 
limit~\eqref{BLimBurg}.
Moreover, the known deformation~\eqref{DefKdV}
for~\eqref{KdV} upon the component~$u_{12}$ of~\eqref{N2SuperField}
allows to fix 
the coefficients of the terms 
that contain only~$u_{12}$ or its derivatives.
The solution to the Gardner deformation problem generates the recurrence
relation between the nontrivial conserved densities~$h_{12}^{(k)}$
which, in the meantime, depend on~$u_0$ and~$u_{12}$.
By correlating them with the $\theta_1\theta_2$-\/components of the
super\/-\/densities~$\bh^{(k)}$ that depend on~$\bu$, 
we derive the Hamiltonians~$\boldsymbol{\cH}^{(k)}$, $k\geq0$,
for the $N{=}2$ supersymmetric $a{=}4$--\/KdV hierarchy,
see section~\ref{SecHam}.

\section{New deformation of 
the Kaup\/--\/Boussinesq equation}\label{SecBurg}
\noindent%
In this section, we construct a new Gardner's deformation
$\gm_\epsilon\colon\bigl(\lim_B\cE\bigr)(\epsilon)\to\lim_B\cE$
for the `minus' Kaup\/--\/Boussinesq equation~\eqref{BLimBurg}, which is the
bosonic limit of the $N{=}2$ supersymmetric system~\eqref{Burg}.
We will use the known deformation~\eqref{DefKdV} to fix
several coefficients in the Miura contraction~$\gm_\epsilon$,
which ensures the difference of the new solution~\eqref{BLimMiura}--\eqref{BLimBurgE} from previously known deformations of~\eqref{BLimBurg},
see~\cite{PamukKale}.
We prove that the new deformation is maximally nontrivial: It yields infinitely many nontrivial conserved densities, and none of the Hamiltonians is lost.

In components, the $N{=}2$ super\/-\/equation~\eqref{Burg} reads
\begin{align*}
u_{0;\xi}&=\bigl(-u_{12}+2u_0^2\bigr)_x, &
u_{1;\xi}&=\bigl(u_{2;x}+4u_0u_1\bigr)_x,\\
u_{2;\xi}&=\bigl(-u_{1,x}+4u_0u_2\bigr)_x, &
u_{12;\xi} &= \bigl(u_{0;xx}+4u_0u_{12}-4u_1u_2\bigr)_x.
\end{align*}
Clearly, it admits the reduction~\eqref{BRed}; moreover, 
the Kaup\/--\/Boussinesq system~\eqref{BLimBurg}  
is the only possible limit for~\eqref{Burg}. 
Let us summarize its well\/-\/known properties~\cite{Kaup75,NutkuPavlov}:

\begin{state}\label{ThTriHam}
The completely integrable Kaup\/--\/Boussinesq 
system~\eqref{BLimBurg} 
inherits the local tri\/-\/Hamiltonian structure from the 
the two local ($\hat{P}_1$ and $\hat{P}_2$) and the nonlocal
$\hat{P}_3=\hat{P}_2\circ\hat{P}_1\circ\hat{P}_2$ operators for the
$N{=}2$, $a{=}4$--\/SKdV hierarchy under the bosonic limit~\eqref{BRed}\textup{:}
\begin{multline*}
\binom{u_0}{u_{12}}_\xi
 =\hat{A}_1^{12}\binom{\delta/\delta u_0}{\delta/\delta u_{12}}
   \Bigl(\int
   \bigl[2u_0^2u_{12}-\tfrac{1}{2}u_{12}^2-\tfrac{1}{2}u_{0;x}^2\bigr]\,\Id x\Bigr)\\
 =\hat{A}_1^{0}\binom{\delta/\delta u_0}{\delta/\delta u_{12}}
   \Bigl(-\int u_0u_{12}\,\Id x\Bigr)
 =\hat{A}_2\binom{\delta/\delta u_0}{\delta/\delta u_{12}}
    \Bigl(-\int u_{12}\,\Id x\Bigr).
\end{multline*}
The senior Hamiltonian operator~$\hat{A}_2$ is
\[
\begin{pmatrix}
u_{0;x}+2u_0\,\tfrac{\Id}{\Id x} &
 u_{12;x}-4u_0u_{0;x}-2u_0^2\,\tfrac{\Id}{\Id x}+2u_{12}\,\tfrac{\Id}{\Id x}
    +\tfrac{1}{2}\left(\tfrac{\Id}{\Id x}\right)^3\vphantom{\Bigr)}\\
u_{12;x}-2u_0^2\,\tfrac{\Id}{\Id x}+2u_{12}\,\tfrac{\Id}{\Id x}
    +\tfrac{1}{2}\left(\tfrac{\Id}{\Id x}\right)^3 &
 -4u_0u_{12}\,\tfrac{\Id}{\Id x}-4\tfrac{\Id}{\Id x}\circ u_0u_{12}
    -u_0\,\left(\tfrac{\Id}{\Id x}\right)^3 
    -\left(\tfrac{\Id}{\Id x}\right)^3\circ u_0
\end{pmatrix}.
\]
The junior Hamiltonian operators~$\hat{A}_1^{0}$ and~$\hat{A}_1^{12}$
are obtained from~$\hat{A}_2$ by the shifts of the respective 
fields, c.f.~\textup{\cite{Magri2000-5,Artur}}\textup{:}
\begin{align*}
\hat{A}_1^0&=\begin{pmatrix}
\tfrac{\Id}{\Id x} & -2u_{0;x}-2u_0\,\tfrac{\Id}{\Id x} \\
-2u_0\,\tfrac{\Id}{\Id x} & 
  -2u_{12;x}-4u_{12}\,\tfrac{\Id}{\Id x}-\left(\tfrac{\Id}{\Id x}\right)^3
\end{pmatrix}
 =\frac{1}{2}\cdot{\left.\frac{\Id}{\Id\lambda}\right|}_{\lambda=0}
    \hat{A}_2{\Bigr|}_{u_0+\lambda}\\
\intertext{and}
\hat{A}_1^{12}&=\begin{pmatrix} 0 & \tfrac{\Id}{\Id x} \\
   \tfrac{\Id}{\Id x} & 0\end{pmatrix}
 =\frac{1}{2}\cdot{\left.\frac{\Id}{\Id\mu}\right|}_{\mu=0}
    \hat{A}_2{\Bigr|}_{u_{12}+\mu}.
\end{align*}
The three operators $\hat{A}_1^0$, $\hat{A}_1^{12}$, and~$\hat{A}_2$ are Poisson compatible.
\end{state}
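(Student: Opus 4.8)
The plan is to verify Proposition~\ref{ThTriHam} by a direct, essentially computational argument organised in four stages, using the $N{=}2$, $a{=}4$--SKdV structures $\hat{\boldsymbol{P}}_1$, $\hat{\boldsymbol{P}}_2$, $\hat{\boldsymbol{P}}_3$ whose existence and compatibility were already established in the Proposition preceding Example~\ref{ExKdVe}. First I would record the bosonic limit operationally: substituting $u_1=u_2\equiv0$ into the $(4\times4)$ matrices $\hat{P}_1$, $\hat{P}_2$ and into the nonlocal composite $\hat{P}_3=\hat{P}_2\circ\hat{P}_1^{-1}\circ\hat{P}_2$, and then restricting to the two bosonic rows/columns indexed by $0$ and $12$. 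The key point to check here is that the fermionic sector decouples in the limit, so that the $(u_0,u_{12})$-block closes; for $\hat{P}_1$ and $\hat{P}_2$ this is immediate from the explicit entries in~\eqref{SecondHam4x4}, and for $\hat{P}_3$ one must verify that the nonlocal terms $\hat{P}_1^{-1}=\bigl(\Id/\Id x\bigr)^{-1}$ do not reintroduce fermionic contributions into the bosonic block after the limit --- this follows because every off-diagonal block of $\hat{P}_2$ that mixes $\{0,12\}$ with $\{1,2\}$ is proportional to $u_1$ or $u_2$ and hence vanishes under~\eqref{BRed}.

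Second, I would carry out the composition $\hat{A}_2 \mathrel{{:}{=}} \lim_B\hat{P}_3 = \bigl(\lim_B\hat{P}_2\bigr)\circ\bigl(\Id/\Id x\bigr)^{-1}\circ\bigl(\lim_B\hat{P}_2\bigr)$ explicitly on the $(2\times2)$ level and check that the result coincides with the senior operator displayed in the statement. This is the most calculation-heavy step: one has to expand the product of two first-order-in-the-off-diagonal, third-order-in-one-corner matrix operators through a single integration $\bigl(\Id/\Id x\bigr)^{-1}$ and collect terms, paying attention to operator ordering (the appearance of both $u_0(\Id/\Id x)^3$ and $(\Id/\Id x)^3\circ u_0$ in the lower-right corner is the telltale sign that the naive formal computation has been done correctly). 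In parallel I would confirm skew-adjointness of the resulting $\hat{A}_2$ directly (each diagonal entry skew-adjoint, the two off-diagonal entries mutually adjoint up to sign), which also serves as an arithmetic check on the composition.

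Third, for the two junior operators I would verify the shift formulas $\hat{A}_1^0 = \tfrac12\,\tfrac{\Id}{\Id\lambda}\big|_{\lambda=0}\hat{A}_2\big|_{u_0+\lambda}$ and $\hat{A}_1^{12} = \tfrac12\,\tfrac{\Id}{\Id\mu}\big|_{\mu=0}\hat{A}_2\big|_{u_{12}+\mu}$ by inspection: differentiating the explicit $\hat{A}_2$ in the constant shift of each field is a one-line substitution, and it must reproduce the stated matrices. These formulas, together with the corresponding shift property $\hat{\boldsymbol{P}}_1^{a=4}=\tfrac12\tfrac{\Id}{\Id\boldsymbol\lambda}\big|_0\hat{\boldsymbol{P}}_2\big|_{\bu+\boldsymbol\lambda}$ already recorded in the excerpt, show that $\lim_B\hat{P}_1$ agrees with $\hat{A}_1^{12}$ (the pure translation $\Id/\Id x$ in the two bosonic variables); I would also note $\hat{A}_1^0=\lim_B(\text{Magri shift of }\hat{P}_2\text{ in }u_0)$ to tie it to the $\{-2,1,4\}$ bi-Hamiltonian story of~\cite{Magri2000-5,Artur}.

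Fourth and last, I would establish mutual Poisson compatibility of $\hat{A}_1^0$, $\hat{A}_1^{12}$, $\hat{A}_2$ and the three Hamiltonian identities in the display. Compatibility is inherited: since $\hat{\boldsymbol{P}}_1$, $\hat{\boldsymbol{P}}_2$, $\hat{\boldsymbol{P}}_3$ are pairwise compatible on the $N{=}2$ hierarchy and the bosonic reduction~\eqref{BRed} is a reduction of the underlying (super-)manifold that intertwines the brackets, the induced brackets on the two-component phase space remain pairwise compatible --- alternatively one checks the Schouten bracket $[\![\hat{A}_i,\hat{A}_j]\!]=0$ directly, which is a finite but routine verification once the operators are in hand. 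The three equalities $\binom{u_0}{u_{12}}_\xi=\hat{A}_1^{12}(\delta\!\int[\cdots]\Id x)=\hat{A}_1^{0}(-\delta\!\int u_0u_{12}\,\Id x)=\hat{A}_2(-\delta\!\int u_{12}\,\Id x)$ then follow by applying each operator to the indicated variational derivative and matching the result against the Kaup--Boussinesq flow~\eqref{BLimBurg}; this is the bosonic shadow of the three-fold Hamiltonian representation~\eqref{Burg} of the $N{=}2$ super-Burgers equation, so it suffices to take the $\theta_1\theta_2$-component of~\eqref{Burg} together with its trivial $u_0$-equation and invoke the already-computed limits of the operators. I expect the genuine obstacle to be purely the bookkeeping in stage two --- getting every operator-ordering and every coefficient (the $\tfrac12(\Id/\Id x)^3$, the factors of $4$) right in the nonlocal composition --- rather than anything conceptual.
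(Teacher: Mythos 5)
The paper offers no proof of Proposition~\ref{ThTriHam} at all: it is introduced with ``let us summarize its well\/-\/known properties'' and rests on the citations to \cite{Kaup75,NutkuPavlov} for the tri\/-\/Hamiltonian structure and to \cite{Magri2000-5,Artur} for the shift construction. Your plan to establish it by direct computation --- take the bosonic limit of $\hat{P}_1$, $\hat{P}_2$, $\hat{P}_3$, observe that the fermionic blocks of \eqref{SecondHam4x4} are all proportional to $u_1$, $u_2$ and hence drop out so that the $(u_0,u_{12})$-block closes, carry out the nonlocal composition, and then verify the shift formulas and the three Hamiltonian identities --- is therefore a genuinely different and more self\/-\/contained route, and the individual steps do go through: the mixing entries of $\hat{P}_2$ indeed vanish under \eqref{BRed}, the composition $\bigl(\lim_B\hat{P}_2\bigr)\circ\bigl(\lim_B\hat{P}_1\bigr)^{-1}\circ\bigl(\lim_B\hat{P}_2\bigr)$ is local, and each of the three displayed operators applied to the indicated variational derivative reproduces \eqref{BLimBurg}. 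You also correctly read $\hat{P}_3$ as $\hat{P}_2\circ\hat{P}_1^{-1}\circ\hat{P}_2$, silently repairing the missing inverse in the statement.

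Two concrete warnings about where the ``routine bookkeeping'' will not confirm the Proposition exactly as printed. First, the bosonic block of $\hat{P}_2\circ\hat{P}_1^{-1}\circ\hat{P}_2$ equals $-2\hat{A}_2$, not $\hat{A}_2$ (e.g.\ its upper\/-\/left entry comes out as $-4u_0\tfrac{\Id}{\Id x}-2u_{0;x}$); likewise $\hat{A}_1^{0}=-\lim_B\hat{P}_2$ and $\hat{A}_1^{12}=\lim_B\hat{P}_1$, so ``inherits'' must be read up to constant normalizations, which is harmless but should be stated. Second, and more seriously for your stage three, the $u_0$-shift identity for $\hat{A}_1^0$ checks out in all four entries, but the $u_{12}$-shift identity fails in the lower\/-\/right corner: $\tfrac12\,\tfrac{\Id}{\Id\mu}\big|_{\mu=0}(\hat{A}_2)_{22}\big|_{u_{12}+\mu}=-2u_0\tfrac{\Id}{\Id x}-2\tfrac{\Id}{\Id x}\circ u_0\neq 0$, whereas $(\hat{A}_1^{12})_{22}=0$. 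So the ``one\/-\/line inspection'' you propose will contradict the printed formula; you should either correct that displayed identity or, as you already do elsewhere, identify $\hat{A}_1^{12}$ directly as $\lim_B\hat{P}_1$, which is what the three Hamiltonian identities actually require. Finally, your compatibility argument by inheritance from the super level needs the explicit remark that $\{u_1=u_2=0\}$ is a Poisson submanifold for each structure because the fermionic variational derivatives of bosonic Hamiltonians vanish there; alternatively, once the (corrected) shift formulas are in hand, pairwise compatibility of $\hat{A}_1^0$, $\hat{A}_1^{12}$, $\hat{A}_2$ follows at once from the standard shift argument of \cite{Magri2000-5,Artur}, which is cleaner than either inheritance or a direct Schouten\/-\/bracket computation.
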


The Kaup\/--\/Boussinesq equation~\eqref{BLimBurg} admits an infinite sequence of integrals of motion. We will derive them via the Gardner deformation. Unlike in~\cite{PamukKale}, from now on we always assume that~\eqref{KdVeKdV} is recovered under~$\tu_0\equiv0$.

We assume that both the extension~$\cE(\epsilon)$ of~\eqref{BLimBurg} and the contraction~$\gm_\epsilon\colon\cE(\epsilon)\to\cE$ into~\eqref{BLimBurg} are homogeneous polynomials in~$\epsilon$. From now on, we denote 
the reduction~\eqref{BLimBurg} by~$\cE$.

First, let us estimate the degrees in~$\epsilon$ for such polynomials $\cE(\epsilon)$ and~$\gm_\epsilon$, by balancing the powers of~$\epsilon$ in the left-{} and right\/-\/hand sides of~\eqref{BLimBurg} with $u_0$ and~$u_{12}$
replaced by the Miura contraction 
$\gm_\epsilon=\bigl\{u_0=u_0\bigl([\tu_0,\tu_{12}],\epsilon\bigr)$,
$u_{12}=u_{12}\bigl([\tu_0,\tu_{12}],\epsilon\bigr)\bigr\}$. 
The time evolution in the left\/-\/hand side, which is of the form 
$u_\xi=\partial_{\tu_\xi}(\gm_\epsilon)$ by the chain rule, 
sums the degrees in~$\epsilon$:
$\deg u_\xi=\deg\gm_\epsilon+\deg\cE(\epsilon)$. At the same time,
we notice that system~\eqref{BLimBurg} is only quadratic\/-\/nonlinear.
Hence its right\/-\/hand side, with $\gm_\epsilon$ substituted for $u_0$ 
and~$u_{12}$, gives the degree $2\times\deg\gm_\epsilon$, 
irrespective of~$\deg\cE(\epsilon)$. 
Consequently, we obtain the balance\footnote{This estimate is rough and can be
improved by operating separately with the components of~$\gm_\epsilon$ 
and~$\cE(\epsilon)$ since, in particular, the Kaup\/--\/Boussinesq 
system~\eqref{BLimBurg} 
is \emph{linear} in~$u_{12}$.} $1:1$ for $\max\deg\gm_\epsilon:\max\deg\cE(\epsilon)$. 
This is in contrast with the balance
$1:2$ for polynomial deformations of the bosonic limit~\eqref{BLim} for the initial SKdV system~\eqref{SKdV}, which is cubic\/-\/nonlinear\footnote{Reductions other than~\eqref{BRed} can produce quadratic\/-\/nonlinear subsystems of the cubic\/-\/nonlinear system~\eqref{SKdV}, 
\textit{e.g.}, if one sets $u_0=0$ and $u_2=0$, see~\eqref{FB} on p.~\pageref{FB}.}
(c.f.\ \cite{MathieuNew}).

Obviously, a lower degree polynomial extension~$\cE(\epsilon)$ contains fewer undetermined coefficients. This is the first profit we gain from passing to~\eqref{Burg} instead of~\eqref{SKdV}. By the same argument, we conclude that
$\gm_\epsilon\colon\cE(\epsilon)\to\cE$, viewed as the algebraic system upon these coefficients, is only \emph{quadratic}\/-\/nonlinear w.r.t.\ the coefficients in~$\gm_\epsilon$ (and, obviously, \emph{linear} w.r.t.\ the coefficients in~$\cE(\epsilon)$; this is valid for any balance $\deg\gm_\epsilon:\deg\cE(\epsilon)$). Hence the size of this overdetermined algebraic system is further decreased.

Second, we use the unique admissible homogeneity weights for the Kaup\/--\/Boussinesq system~\eqref{BLimBurg},
\[
|u_0|=1,\quad |u_{12}|=2,\quad |\Id/\Id\xi|=2;
\]
here $|\Id/\Id x|\equiv 1$ is the normalization.
The Miura contraction
$\gm_\epsilon=\bigl\{u_0=\tu_0+\epsilon\cdot(\dots)$, 
 $u_{12}=\tu_{12}+\epsilon\cdot(\dots)\bigr\}$,
which we assume regular at the origin,
implies that $|\tu_0|=1$ and $|\tu_{12}|=2$ as well. 
We let $|\epsilon|=-1$ be the difference of weights for every two 
successive 
Hamiltonians for the $N{=}2$, $a{=}4$--\/SKdV hierarchy, 
see~\cite{MathieuNew} and~\eqref{Hamsa} below.
In this setup, all functional coefficients of the powers~$\epsilon^k$ both in~$\cE(\epsilon)$ and~$\gm_\epsilon$ are homogeneous differential polynomials in $u_0$, $u_{12}$, and their derivatives w.r.t.\ $x$. It is again important that the time~$\xi$ of weight $|\Id/\Id\xi|=2$ in~\eqref{Burg} precedes the time~$t$
with $|\Id/\Id t|=3$ in the hierarchy of~\eqref{SKdV}, 
where $|\theta_i|=-\tfrac{1}{2}$ and~$|\bu|=1$. 
As before, we have further decreased the number of undetermined coefficients.

The polynomial ansatz for Gardner's deformation of~\eqref{BLimBurg}
is generated by the procedure\footnote{\label{FootGenSSPoly}%
The call is \texttt{GenSSPoly(N,wglist,cname,mode)}, where 
\begin{itemize}
\item \texttt{N} is the number
of Grassmann variables $\theta_1$,\ $\ldots$,\ $\theta_N$;
\item \texttt{wglist} is the list of lists \texttt{\{afwlist, abwlist, wgt\}},
each containing the list \texttt{afwlist} of weights for the fermionic super\/-\/fields
and the list \texttt{abwlist} of weights for the bosonic super\/-\/fields; here 
\texttt{wgt} is the weight of the polynomial to be constructed;
\item \texttt{cname} is the prefix for the names of arising undetermined coefficients (\textit{e.g.}, \texttt{p} produces~$p_1,p_2,\dots$);
\item \texttt{mode} is the list of flags, which can be \texttt{fonly}, 
whence only fermionic polynomials are generated, or \texttt{bonly}, which yields the bosonic output.
\end{itemize}}
\texttt{GenSSPoly}, which is a new possibility 
in the the analytic software~\cite{SsTools}.
We thus obtain the determining system $\gm_\epsilon\colon\cE(\epsilon)\to\cE$.
Using \textsc{SsTools}, we split it to the overdetermined system of algebraic equations, which are linear w.r.t.\ $\cE(\epsilon)$ and quadratic\/-\/nonlinear w.r.t.~$\gm_\epsilon$. Moreover, we claim that this system is 
\emph{triangular}. 
Indeed, it is ordered by the powers of~$\epsilon$, since the determining system is identically satisfied at zeroth order and because equations at lower orders of~$\epsilon$ involve only the coefficients of its lower powers
from~$\gm_\epsilon$ and~$\cE(\epsilon)$.

Thirdly, we use the deformation~\eqref{DefKdV} of the Korteweg\/--\/de Vries equation~\cite{Gardner}. We recall that
\begin{itemize}
\item Miura's contraction $\gm_\epsilon$ is common for all two\/-\/component systems in the bosonic limit, see~\eqref{BRed}, of the $N{=}2$, $a{=}4$--\/SKdV hierarchy;
\item for any $a$, the bosonic limit of~\eqref{SKdV}, 
see~\eqref{SKdVComponents} and~\eqref{BLim},
incorporates the Korteweg\/--\/de Vries equation~\eqref{KdV}.
\end{itemize}
Using~\eqref{KdVeKdV}, we fix those coefficients in~$\gm_\epsilon$ which depend only on~$u_{12}$ and its derivatives, but not on~$u_{0}$ or its derivatives.
Apparently, we discard the knowledge of such coefficients in the extension of the bosonic limit~\eqref{BLim}, 
since for us now it is not the object to be deformed. But the minimization of the algebraic system, which we have achieved by passing 
to~\eqref{Burg}, is so significant that this temporary loss in inessential.
Furthermore, the above reasoning shows that the recovery of the coefficients in the extension~$\cE(\epsilon)$ amounts to solution of linear equations, while finding the coefficients in~$\gm_\epsilon$ would cost us the necessity to solve nonlinear algebraic systems. We managed to fix some of those constants for granted.

We finally remark that the normalization of at least one coefficient in the deformation problem cancels the reduntant dilation of the parameter~$\epsilon$,
which, otherwise, would remain until the end.
This is our fourth simplification.\footnote{There is one more possibility to reduce the size of the algebraic system: this can be achieved by a thorough balance of the \emph{differential orders} 
of~$\gm_\epsilon$ and~$\cE(\epsilon)$.}

We let the degrees $\deg\gm_\epsilon=\deg\cE(\epsilon)$ be equal to four
(c.f.~\cite{MathieuNew}).
Under this assumption, the two\/-\/component homogeneous polynomial extension~$\cE(\epsilon)$ of system~\eqref{BLimBurg} contains~$160$
undetermined coefficients. At the same time, the two components of the Miura
contraction~$\gm_\epsilon$ depend on~$94$
coefficients. However, we decrease this number by nine, setting the coefficient of~$\tu_{12;x}$ equal to ${+}1$ and, similarly, to ${-}1$ for~$\tu_{12}^2$ (see~\eqref{KdVeKdV}, 
where the $\pm$ sign is absorbed by $\epsilon\mapsto {-}\epsilon$).
Likewise, we set equal to zero
the seven coefficients of $\tu_{12;xx}$, $\tu_{12}\tu_{12;x}$, $\tu_{12;xxx}$,
$\tu_{12}^3$, $\tu_{12;x}^2$, $\tu_{12}\tu_{12;xx}$, and~$\tu_{12;xxxx}$
in~$\gm_\epsilon$.

The resulting algebraic system with the shortened list of unknowns and with the auxiliary list of nine substitutions is handled by~\textsc{SsTools}
and then solved by using \textsc{Crack}~\cite{WolfCrack}.

\begin{theor}
Under the above assumptions, the Gardner deformation problem
for the Kaup\/--\/Boussinesq equation~\eqref{BLimBurg}
has   
a unique real solution 
of degree~$4$. The 
Miura contraction~$\gm_\epsilon$ is given by 
\begin{subequations}\label{BLimMiura}
\begin{align}
u_{0}  &= \tu_{0} + \epsilon \tu_{0;x} - 2\epsilon^2\tu_{12}\tu_{0},
   \label{BLimMiurau0}\\
u_{12} &= \tu_{12} + \epsilon\bigl(\tu_{12;x} - 2\tu_{0}\tu_{0;x}\bigr)
  + \epsilon^2\bigl(4\tu_{12}\tu_{0}^2 -\tu_{12}^2  - \tu_{0;x}^2\bigr)
+ 4\epsilon^3 \tu_{12}\tu_{0}\tu_{0;x}
 -4\epsilon^4  \tu_{12}^2\tu_{0}^2.\label{BLimMiurau12}
\end{align}
\end{subequations}
The extension~$\cE(\epsilon)$ of
~\eqref{BLimBurg} 
is
\begin{subequations}\label{BLimBurgE}
\begin{align}
\tu_{0;\xi} &= -\tu_{12;x}+4u_0\tu_{0;x}
 +2\epsilon\bigl(\tu_0\tu_{0;x}\bigr)_x
 -4\epsilon^2\bigl(\tu_0^2u_{12}\bigr)_x,
\\
\tu_{12;\xi}&=\tu_{0;xxx}+4\bigl(\tu_0\tu_{12}\bigr)_x
 -2\epsilon\bigl(\tu_0\tu_{12;x}\bigr)_x
 -4\epsilon^2\bigl(\tu_0\tu_{12}^2\bigr)_x.
\end{align}
\end{subequations}
System~\eqref{BLimBurgE} 
preserves the first Hamiltonian operator
$\hat{A}_1^\epsilon=\left(\begin{smallmatrix} 0 & \Id/\Id x\\ \Id/\Id x & 0\end{smallmatrix}\right)$ from~$\hat{A}_1^{12}$ for~\eqref{BLimBurg}.
\end{theor}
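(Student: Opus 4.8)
The plan is to treat this as a constrained polynomial-algebra computation and then verify the Hamiltonian claim separately by hand. First, I would set up the most general degree-$4$ ansatz: write $\gm_\epsilon$ as $u_0=\tu_0+\sum_{k=1}^4\epsilon^k P_k[\tu_0,\tu_{12}]$ and $u_{12}=\tu_{12}+\sum_{k=1}^4\epsilon^k Q_k[\tu_0,\tu_{12}]$, where each $P_k$, $Q_k$ is a homogeneous differential polynomial of the appropriate weight ($|P_k|=1+k$, $|Q_k|=2+k$) in the weights $|u_0|=1$, $|u_{12}|=2$, $|\Id/\Id x|=1$; similarly write $\cE(\epsilon)$ as a conserved-current form $\tu_{0;\xi}=\Id/\Id x(\cdots)$, $\tu_{12;\xi}=\Id/\Id x(\cdots)$ with degree-$4$ $\epsilon$-dependence. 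Then I would impose the two side conditions coming from Example~\ref{ExKdVe}: that $\gm_\epsilon$ restricted to $\tu_0\equiv0$ reduce to~\eqref{KdVeKdV}, which fixes the coefficient of $\tu_{12;x}$ to $+1$ and of $\tu_{12}^2$ to $-1$ and kills the seven listed pure-$u_{12}$ monomials; this is the normalization that guarantees novelty with respect to~\cite{PamukKale} and simultaneously removes the $\epsilon$-dilation freedom.

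Second, I would generate the determining system: substitute $\gm_\epsilon$ into the Kaup--Boussinesq system~\eqref{BLimBurg} (written as $u_{i;\xi}=\text{RHS}_i$) using the chain rule $u_{i;\xi}=\dd_{\tu_{0;\xi}}(\gm_\epsilon)\cdot\tu_{0;\xi}+\dd_{\tu_{12;\xi}}(\gm_\epsilon)\cdot\tu_{12;\xi}$, then replace $\tu_{0;\xi}$ and $\tu_{12;\xi}$ by the right-hand sides of the sought extension $\cE(\epsilon)$, and finally eliminate all remaining $x$-derivatives of $\tu_0,\tu_{12}$ via prolongation so that everything becomes a polynomial identity in the jet variables and in $\epsilon$. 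Equating coefficients yields an overdetermined algebraic system that, as noted in the text, is linear in the coefficients of $\cE(\epsilon)$, quadratic in those of $\gm_\epsilon$, and triangular in the $\epsilon$-grading (order $\epsilon^0$ is the identity; order $\epsilon^j$ involves only coefficients from powers $\le j$). I would solve it order by order in $\epsilon$: at order $\epsilon^1$ the linear part forces $P_1=\tu_{0;x}$, $Q_1=\tu_{12;x}-2\tu_0\tu_{0;x}$ and the corresponding first-order terms in $\cE(\epsilon)$; then propagate through $\epsilon^2,\epsilon^3,\epsilon^4$, at each stage first solving the linear equations for the new $\cE(\epsilon)$-coefficients and then the (now effectively linear, because lower data is fixed) equations for the new $\gm_\epsilon$-coefficients. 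In practice this is the run of \textsc{SsTools} plus \textsc{Crack}; what I would actually write down is the bookkeeping that the system closes and that the only real branch — the choice of square-root sign in the zero-order generating function, cf.~\eqref{GenFnKdV} — has already been fixed by the normalization, so the solution is unique over $\BBR$ (a second formal solution would require $\epsilon\mapsto-\epsilon$ or complex coefficients and is excluded).

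Third, once~\eqref{BLimMiura}--\eqref{BLimBurgE} are in hand, the Hamiltonian assertion is a short independent check: I would verify that $\cE(\epsilon)$ in the form~\eqref{BLimBurgE} can be written as $\binom{\tu_0}{\tu_{12}}_\xi=\hat A_1^\epsilon\binom{\delta/\delta\tu_0}{\delta/\delta\tu_{12}}\mathcal H^\epsilon$ for $\hat A_1^\epsilon=\left(\begin{smallmatrix}0&\Id/\Id x\\ \Id/\Id x&0\end{smallmatrix}\right)$ and an explicit $\epsilon$-deformed Hamiltonian functional $\mathcal H^\epsilon=\int h^\epsilon[\tu_0,\tu_{12}]\,\Id x$. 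Concretely, from the top component of~\eqref{BLimBurgE} the density must satisfy $\delta\mathcal H^\epsilon/\delta\tu_0 = -\tu_{12}+2\tu_0^2 + \epsilon(\cdots)+\epsilon^2(\cdots)$ read off as $\Id/\Id x$ of the bracketed expression divided through; I would reconstruct $h^\epsilon$ by the homotopy (Volterra) formula and then confirm that the $\tu_{12}$-variational derivative reproduces the second component of~\eqref{BLimBurgE}. That $\hat A_1^\epsilon=\hat A_1^{12}$ literally (no $\epsilon$-dependence in the operator) is immediate since $\hat A_1^{12}=\left(\begin{smallmatrix}0&\Id/\Id x\\ \Id/\Id x&0\end{smallmatrix}\right)$ is already constant-coefficient in Proposition~\ref{ThTriHam}; the content is that the \emph{deformed} flow stays Hamiltonian with respect to it, which reduces to the closedness of the deformed covector, i.e.\ self-adjointness of its Fréchet derivative — a one-line verification on the explicit polynomials.

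\textbf{Main obstacle.} The genuine difficulty is not any single algebraic step but controlling the branching: showing that, after imposing the KdV normalization, the triangular system has \emph{exactly one} real solution at degree $4$ — in particular that no lower-degree ($\deg<4$) solution with the same normalization exists and that the degree-$4$ solution does not split into a one-parameter family — and, separately, ruling out that a nontrivial solution is missed because the rough $1{:}1$ degree balance between $\gm_\epsilon$ and $\cE(\epsilon)$ is too generous. I would handle uniqueness by exhibiting, at each order $\epsilon^j$, that the linear subsystem for the genuinely new $\cE(\epsilon)$-coefficients has full rank given the already-fixed lower data, so that no free parameters survive; the non-existence of a lower-degree solution follows by running the same elimination under the stricter degree bound and hitting an inconsistency at order $\epsilon^2$ or $\epsilon^3$. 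The maximal-nontriviality claim (that none of the Hamiltonians is lost) I would reduce, exactly as in Example~\ref{ExKdVe}, to the zero-order generating function: taking the derivative-free part of~\eqref{BLimMiurau0}--\eqref{BLimMiurau12} gives an algebraic system $u_0=\breve u_0-2\epsilon^2\breve u_{12}\breve u_0+\cdots$, $u_{12}=\breve u_{12}-\epsilon^2\breve u_{12}^2-\cdots$ whose solution, selected by $\breve u_{12}|_{\epsilon=0}=u_{12}$, has strictly positive, growing Taylor coefficients in $\epsilon^2$ by the same induction as for~\eqref{GenFnKdV}, so every even-weight Hamiltonian survives; the odd ones are total derivatives by the same parity argument. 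That reduction is the conceptual heart of the proof; the rest is the \textsc{SsTools}/\textsc{Crack} computation reported in the statement.
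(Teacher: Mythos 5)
Your proposal follows essentially the same route as the paper: the paper's argument for this theorem consists precisely of the degree/weight bookkeeping, the normalization of the pure-$\tu_{12}$ coefficients via the known KdV deformation~\eqref{KdVeKdV}, and the order-by-order solution of the resulting triangular overdetermined algebraic system (linear in the $\cE(\epsilon)$-coefficients, quadratic in the $\gm_\epsilon$-coefficients) by \textsc{SsTools} and \textsc{Crack}, with uniqueness over~$\BBR$ being an output of that computation under the imposed normalization. Your additional remarks on verifying the Hamiltonian form via the homotopy formula and on the zero-order generating functions are sound, though the latter is really the content of the paper's \emph{next} theorem rather than of this one.
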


The Miura contraction~$\gm_\epsilon$ 
is shared by all equations in the Kaup\/--\/Boussinesq hierarchy.
Solving the linear algebraic system, we find the
extension $\bigl(\lim_B\cE_{\text{SKdV}}^{a{=}4}\bigr)(\epsilon)$
for the bosonic limit~\eqref{BLim} of~\eqref{SKdV} with $a{=}4$:
\begin{subequations} \label{AppR}
\begin{align}
\tu_{0;t} &=  - \tu_{0;xxx}  - 6\bigl(\tu_{0}\tu_{12}\bigr)_x 
+ 12\tu_{0}^2\tu_{0;x}
  + 12\epsilon \bigl(\tu_{0}^2\tu_{0;x} \bigr)_x 
 + 6\epsilon^2 \bigl( \tu_{0}\tu_{12}^2 - 4\tu_{12}\tu_{0}^3 + 
\tu_{0}\tu_{0;x}^2) \bigr)_x  \notag \\
 {}&{}\qquad{}   + \epsilon^3 \bigl( (-24)\tu_{12}\tu_{0}^2\tu_{0;x}\bigr)_x 
   + \epsilon^4 \bigl( 24\tu_{12}^2\tu_{0}^3 \bigr)_x,\\
\tu_{12;t} &= - \tu_{12;xxx} - 6\tu_{12}\tu_{12;x}
 + 12\bigl(\tu_{0}^2\tu_{12}\bigr)_x + 6\tu_{0}\tu_{0;xxx}  
 + 12\tu_{0;xx}\tu_{0;x} \notag\\
{}&{}\qquad{}
 + 6\epsilon \bigl( \tu_{0;xx}\tu_{0;x} - 2\tu_{0}^2\tu_{12;x}  \bigr)_x \notag\\
{}&{}\qquad{} + 2\epsilon^2 \bigl( \tu_{12}^3 - 18\tu_{12}^2\tu_{0}^2 
   - 6\tu_{12}\tu_{0}\tu_{0;xx} - 3\tu_{12}\tu_{0;x}^2 
   - 6\tu_{0}\tu_{12;x}\tu_{0;x}\bigr)_x\notag\\
{}&{}\qquad{}
 + 24\epsilon^3 \bigl( \tu_{12}\tu_{0}^3\tu_{12;x} \bigr)_x 
 + 24\epsilon^4 \bigl( \tu_{12}^3\tu_{0}^2 \bigr)_x. 
\end{align}
\end{subequations}
Now we expand the fields 
$\tu_0(\epsilon)=\sum_{k=0}^{+\infty}\tu_0^{(k)}\cdot\epsilon^k$ and
$\tu_{12}(\epsilon)=\sum_{k=0}^{+\infty}\tu_{12}^{(k)}\cdot\epsilon^k$, and
plug the formal power series for~$\tu_0$ and~$\tu_{12}$ in~$\gm_\epsilon$.
Hence we start from
$\tu_0^{(0)}=u_0$ and $\tu_{12}^{(0)}=u_{12}$, which is standard, and
proceed with the recurrence relations between the conserved densities
$u_0^{(k)}$ and~$u_{12}^{(k)}$,
\begin{gather*}
\tu_0^{(1)} = - u_{0;x}, \quad 
\tu_0^{(n)} = - \tfrac{\Id}{\Id x}\tu_0^{({n-1})} + \sum\limits_{j+k=n-2}2\tu_{12}^{(k)}\tu_0^{(j)},\quad \forall n\geq2; \\
%
\tu_{12}^{(1)} = 2u_{0}u_{0;x} - u_{12;x},\quad 
\tu_{12}^{(2)} = 
u_{12}^2+u_{12;xx}-4u_{12}u_0^2-3u_{0;x}^2-4u_0u_{0;xx},
\end{gather*}\begin{gather*}
\tu_{12}^{(3)} = \sum\limits_{j+k=2}2\tu_0^{(j)}\tfrac{\Id}{\Id x}\tu_0^{(k)}  
  -\tfrac{\Id}{\Id x}\tu_{12}^{({2})} 
+\sum\limits_{j+k=1}\Bigl(\tu_{12}^{(j)}\tu_{12}^{(k)} + (\tfrac{\Id}{\Id x}\tu_0^{(j)})(\tfrac{\Id}{\Id x}\tu_0^{(k)})\Bigr)\\
- \sum\limits_{j+k+l=1} 4\tu_{12}^{(j)}\tu_0^{(k)}\tu_0^{(l)} - 4u_{12}u_{0}u_{0;x},\\
\tu_{12}^{(n)} = - \tfrac{\Id}{\Id x}\tu_{12}^{({n-1})}
+\sum\limits_{j+k=n-1}2\tu_0^{(j)}\tfrac{\Id}{\Id x}\tu_0^{(k)}  + \sum\limits_{j+k=n-2}(\tu_{12}^{(j)}\tu_{12}^{(k)} + (\tfrac{\Id}{\Id x}(\tu_0^{(j)})
  \tfrac{\Id}{\Id x}(\tu_0^{(k)}))\\ 
-\sum\limits_{j+k+l=n-2} 4\tu_{12}^{(j)}\tu_0^{(k)}\tu_0^{(l)} 
-\sum\limits_{j+k+l=n-3}4\tu_{12}^{(j)}\tu_0^{(k)}\tfrac{\Id}{\Id x}\tu_0^{(l)}\\
+ \sum\limits_{j+k+l+m=n-4}4\tu_{12}^{(j)}\tu_{12}^{(k)}\tu_0^{(l)}\tu_0^{(m)},\qquad \forall n\geq4. 
\end{gather*}

\begin{example}\label{ExBLim7dens}
Following this recurrence, let us generate the eight lowest weight nontrivial conserved densities, which start the tower of Hamiltonians for the 
Kaup\/--\/Boussinesq hierarchy.

We begin with $\tu_0^{(0)}=u_0$ and $\tu_{12}^{(0)}=u_{12}$.
Next, we obtain the densities
\[
\tu_0^{(2)} =u_{0;xx} + 2u_{0}u_{12},\qquad 
\tu_{12}^{(2)}=-4u_{0;xx}u_{0}-3u_{0;x}^2+u_{12;xx}-4u_{0}^2u_{12}+u_{12}^2,
\]
which contribute to
the tri\/-\/Hamiltonian representation of~\eqref{BLimBurg},
see Proposition~\ref{ThTriHam}. Now we proceed with 
\begin{align*}
\tu_0^{(4)}&= u_{0;4x} - 12u_{0;xx}u_{0}^2 + 6u_{0;xx}u_{12} - 18u_{0;x}^2u_{0} + 10u_{0;x}u_{12;x} + 6u_{12;xx}u_{0} - 8u_{0}^3u_{12} + 6u_{0}u_{12}^2, 
\\
\tu_{12}^{(4)}&= - 8u_{0;4x}u_{0} - 20u_{0;xxx}u_{0;x} - 13u_{0;xx}^2 + 32u_{0;xx}u_{0}^3 - 48u_{0;xx}u_{0}u_{12} + 72u_{0;x}^2u_{0}^2 - 38u_{0;x}^2u_{12} -{}\\
{}&{}- 80u_{0;x}u_{12;x}u_{0} + u_{12;4x} - 24u_{12;xx}u_{0}^2 + 6u_{12;xx}u_{12} + 5u_{12;x}^2 + 16u_{0}^4u_{12} - 24u_{0}^2u_{12}^2 + 2u_{12}^3,
\\
\tu_0^{(6)}&= u_{0;6x} - 40u_{0;4x}u_{0}^2 + 10u_{0;4x}u_{12} - 200u_{0;xxx}u_{0;x}u_{0} + 28u_{0;xxx}u_{12;x} - 130u_{0;xx}^2u_{0} - {}  \\
{}&{}- 198u_{0;xx}u_{0;x}^2 + 38u_{0;xx}u_{12;xx} + 80u_{0;xx}u_{0}^4 - 240u_{0;xx}u_{0}^2u_{12} + 30u_{0;xx}u_{12}^2 + 240u_{0;x}^2u_{0}^3 - {} \\
{}&{} - 380u_{0;x}^2u_{0}u_{12} + 28u_{0;x}u_{12;xxx} - 400u_{0;x}u_{12;x}u_{0}^2 + 100u_{0;x}u_{12;x}u_{12} + 10u_{12;4x}u_{0} - {} \\
{}&{}- 80u_{12;xx}u_{0}^3 + 60u_{12;xx}u_{0}u_{12} + 50u_{12;x}^2u_{0} + 32u_{0}^5u_{12} - 80u_{0}^3u_{12}^2 + 20u_{0}u_{12}^3,
\\
\tu_{12}^{(6)}&= - 12u_{0;6x}u_{0} - 42u_{0;5x}u_{0;x} - 80u_{0;4x}u_{0;xx} + 160u_{0;4x}u_{0}^3 - 120u_{0;4x}u_{0}u_{12} - 49u_{0;xxx}^2 + {}\\
{}&{}+ 1200u_{0;xxx}u_{0;x}u_{0}^2 - 312u_{0;xxx}u_{0;x}u_{12} - 336u_{0;xxx}u_{12;x}u_{0} + 780u_{0;xx}^2u_{0}^2 - 206u_{0;xx}^2u_{12} + {}\\
{}&{} + 2376u_{0;xx}u_{0;x}^2u_{0} - 716u_{0;xx}u_{0;x}u_{12;x} - 456u_{0;xx}u_{12;xx}u_{0} - 192u_{0;xx}u_{0}^5 + 960u_{0;xx}u_{0}^3u_{12} -{}\\
{}&{}- 360u_{0;xx}u_{0}u_{12}^2 + 297u_{0;x}^4 - 366u_{0;x}^2u_{12;xx} - 720u_{0;x}^2u_{0}^4 + 2280u_{0;x}^2u_{0}^2u_{12} - 290u_{0;x}^2u_{12}^2 -{}\\
{}&{}- 336u_{0;x}u_{12;xxx}u_{0} + 1600u_{0;x}u_{12;x}u_{0}^3 - 1200u_{0;x}u_{12;x}u_{0}u_{12} + u_{12;6x} - 60u_{12;4x}u_{0}^2 +{}\\
{}&{}+ 10u_{12;4x}u_{12} + 28u_{12;xxx}u_{12;x} + 19u_{12;xx}^2 + 240u_{12;xx}u_{0}^4 - 360u_{12;xx}u_{0}^2u_{12} + 30u_{12;xx}u_{12}^2-{}\\
{}&{} - 300u_{12;x}^2u_{0}^2 + 50u_{12;x}^2u_{12} - 64u_{0}^6u_{12} + 240u_{0}^4u_{12}^2 - 120u_{0}^2u_{12}^3 + 5u_{12}^4,\qquad\text{\textit{etc}}.
\end{align*}
We will use these formulas in the next section, where, as an illustration, we re\/-\/derive the seven super\/-\/Hamiltonians of~\cite{MathieuNew}.
\end{example}
\enlargethispage{0.7\baselineskip}
\begin{theor}
In the above notation, the following statements hold\textup{:}
\begin{itemize}
\item The conserved densities $\tu_0^{(2k)}$ and~$\tu_{12}^{(2k)}$ of weights
$2k+1$ and $2k+2$, respectively, are nontrivial for all integers~$k\geq0$.
\item Consider the zero\/-\/order components $\breve{u}_0(u_0,u_{12},\epsilon)$
and $\breve{u}_{12}(u_0,u_{12},\epsilon)$ of the series 
$\tu_0\bigl([u_0,u_{12}],\epsilon\bigr)$ and
$\tu_{12}\bigl([u_0,u_{12}],\epsilon\bigr)$ with differential\/-\/polynomial coefficients. Then these generating functions are given by the 
formulas
\begin{subequations}
\begin{align}
\left(\breve{u}_0(u_0,u_{12},\epsilon^2)\right)^2&=\frac{1}{8\epsilon^2}\cdot
\left[4\epsilon^2(u_0^2+u_{12})-1+
 \sqrt{1+8\epsilon^2(u_0^2-u_{12})+16\epsilon^4(u_0^2+u_{12})^2}\right],
  \label{u0zero}\\
\breve{u}_{12}(u_0,u_{12},\epsilon^2)&=
\frac{1}{2\epsilon^2}\cdot\left[
 1-\sqrt{\tfrac{1}{2}-2\epsilon^2(u_{12}+u_0^2)
  +\tfrac{1}{2}\sqrt{1+8\epsilon^2(u_0^2-u_{12})+16\epsilon^4(u_0^2+u_{12})^2
}}\right].\label{u12zero}
\end{align}
\end{subequations}
\item The generating functions for the odd\/-\/index
conserved densities $\tu_0^{(2k+1)}$ and~$\tu_{12}^{(2k+1)}$ are expressed
via the even\/-\/index densities, see~\eqref{u0oddTriv} and~\eqref{u12oddTriv},
respectively. We claim that all the odd\/-\/index densities
are trivial. 
\end{itemize}
\end{theor}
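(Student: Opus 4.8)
The plan is to follow, in the two\/-\/component setting, the three arguments of Example~\ref{ExKdVe}, after first putting the Miura contraction~\eqref{BLimMiura} into a more transparent form. Completing the square in~\eqref{BLimMiurau12} and using~\eqref{BLimMiurau0} in the shape $u_0=\tu_0+\epsilon\bigl(\tu_{0;x}-2\epsilon\tu_{12}\tu_0\bigr)$, one verifies the identity
\[
u_{12}+u_0^2=\tu_{12}+\tu_0^2+\epsilon\,\tu_{12;x}-\epsilon^2\tu_{12}^2 ,
\]
which is Gardner's KdV relation~\eqref{KdVeKdV} written for $\tu_{12}+\tu_0^2$ with $\tu_0$ suppressed inside the derivative terms. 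This identity together with~\eqref{BLimMiurau0} is the starting point for all three claims.

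For the closed forms one keeps, in the two relations, only the monomials free of $x$-derivatives of $\tu_0,\tu_{12}$; this yields the algebraic system $u_0=\breve{u}_0\bigl(1-2\epsilon^2\breve{u}_{12}\bigr)$ and $u_{12}+u_0^2=\breve{u}_{12}\bigl(1-\epsilon^2\breve{u}_{12}\bigr)+\breve{u}_0^2$, to be solved under the normalization $\breve{u}_0|_{\epsilon=0}=u_0$, $\breve{u}_{12}|_{\epsilon=0}=u_{12}$. Setting $w=1-2\epsilon^2\breve{u}_{12}$, eliminating $\breve{u}_0=u_0/w$ and clearing denominators reduces the system to the quadratic $4\epsilon^2u_{12}\,w^2=(1-w^2)\bigl(w^2+4\epsilon^2u_0^2\bigr)$ in~$w^2$; the branch with $w^2\to1$ as $\epsilon\to0$ is $w^2=\tfrac12\bigl(1-4\epsilon^2(u_0^2+u_{12})+R\bigr)$ with $R=\sqrt{1+8\epsilon^2(u_0^2-u_{12})+16\epsilon^4(u_0^2+u_{12})^2}$, and rationalizing $\breve{u}_0^2=u_0^2/w^2$ and $\breve{u}_{12}=(1-w)/(2\epsilon^2)$ produces exactly~\eqref{u0zero}--\eqref{u12zero}.

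For nontriviality I would use that a total $x$-derivative of a differential polynomial contains no derivative\/-\/free monomial; hence $\tu_0^{(2k)}$ (resp.~$\tu_{12}^{(2k)}$) is nontrivial once its derivative\/-\/free part, which is the coefficient of $\epsilon^{2k}$ in $\breve{u}_0$ (resp.~$\breve{u}_{12}$) found above, does not vanish identically. Specializing~\eqref{u12zero} to $u_0=0$ gives $\breve{u}_{12}|_{u_0=0}=\bigl(1-\sqrt{1-4\epsilon^2u_{12}}\bigr)/(2\epsilon^2)$, whose coefficient of $\epsilon^{2k}$ is the positive multiple $c(k)\,u_{12}^{k+1}$ of Example~\ref{ExKdVe}; thus $\breve{u}_{12}^{(k)}\not\equiv0$. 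From~\eqref{u0zero}, $\breve{u}_0=u_0\cdot(w^2)^{-1/2}$ and $(w^2)^{-1/2}|_{u_0=0}=(1-4\epsilon^2u_{12})^{-1/2}=\sum_k\binom{2k}{k}(\epsilon^2u_{12})^k$, so every coefficient of $\epsilon^{2k}$ in $(w^2)^{-1/2}$ is a nonzero polynomial and $\breve{u}_0^{(k)}\not\equiv0$.

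Finally, write $\tu_0=\omega_0^{(0)}+\epsilon\,\omega_1^{(0)}$ and $\tu_{12}=\omega_0^{(12)}+\epsilon\,\omega_1^{(12)}$, where $\omega_0^{(0)}=\sum_k\tu_0^{(2k)}\epsilon^{2k}$, $\omega_1^{(0)}=\sum_k\tu_0^{(2k+1)}\epsilon^{2k}$, and likewise $\omega_0^{(12)},\omega_1^{(12)}$ for $\tu_{12}$, and equate the $\epsilon$-odd parts of the two Miura relations, whose left\/-\/hand sides are $\epsilon$-independent. In the rewritten form no $\tfrac{\Id}{\Id x}\omega_1^{(12)}$\nobreakdash-{} or $\tfrac{\Id}{\Id x}\omega_1^{(0)}$-term survives (equivalently, had one used~\eqref{BLimMiurau12} directly, the $\tfrac{\Id}{\Id x}\omega_1^{(0)}$-contributions would cancel), so one obtains the linear system
\[
\begin{gathered}
\bigl(1-2\epsilon^2\omega_0^{(12)}\bigr)\,\omega_1^{(0)}=2\epsilon^2\omega_0^{(0)}\omega_1^{(12)}-\tfrac{\Id}{\Id x}\omega_0^{(0)},\\
\bigl(1-2\epsilon^2\omega_0^{(12)}\bigr)\,\omega_1^{(12)}=-2\omega_0^{(0)}\omega_1^{(0)}-\tfrac{\Id}{\Id x}\omega_0^{(12)} .
\end{gathered}
\]
Solving it, with $\mu=\bigl(\omega_0^{(0)}\bigr)^2-\omega_0^{(12)}+\epsilon^2\bigl(\omega_0^{(12)}\bigr)^2$, gives
\[
\begin{gathered}
\omega_1^{(12)}=\frac{1}{4\epsilon^2}\,\tfrac{\Id}{\Id x}\log\bigl(1+4\epsilon^2\mu\bigr),\\
\omega_1^{(0)}=-\frac{1}{2\epsilon}\,\tfrac{\Id}{\Id x}\arctan\Bigl(\frac{2\epsilon\,\omega_0^{(0)}}{1-2\epsilon^2\omega_0^{(12)}}\Bigr) ,
\end{gathered}
\]
which are~\eqref{u0oddTriv}--\eqref{u12oddTriv}; both right\/-\/hand sides expand in even powers of $\epsilon$ with differential\/-\/polynomial coefficients, so every $\tu_0^{(2k+1)}$ and $\tu_{12}^{(2k+1)}$ lies in $\img\tfrac{\Id}{\Id x}$ and is trivial. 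The only real obstacle is computational: recognizing the square\/-\/completion that brings~\eqref{BLimMiura} to its transparent form, and then verifying the two closed formulas for $\omega_1^{(0)},\omega_1^{(12)}$ (the disappearance of the $\tfrac{\Id}{\Id x}\omega_1^{(\cdot)}$-terms, which keeps the odd\/-\/part system algebraic rather than a genuine ODE, is the crucial point); the remaining steps then follow the template of Example~\ref{ExKdVe} essentially verbatim.
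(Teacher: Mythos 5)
Your proof is correct. For the first two bullets it is essentially the paper's own argument in a mildly reorganized form: the paper likewise passes to the derivative\/-\/free parts, obtains $u_0=\breve{u}_0\,(1-2\epsilon^2\breve{u}_{12})$ as in~\eqref{Short0}, and combines the two relations into $u_0^2+u_{12}=\breve{u}_0^2+\breve{u}_{12}-\epsilon^2\breve{u}_{12}^2$ before extracting~\eqref{u0zero}--\eqref{u12zero}; your substitution $w=1-2\epsilon^2\breve{u}_{12}$ and the resulting quadratic in $w^2$ is an equivalent repackaging, and your nontriviality argument (reduction to $u_0=0$, the generating function~\eqref{GenFnKdV}, and the central\/-\/binomial expansion of $(1-4\epsilon^2u_{12})^{-1/2}$) replaces the paper's induction on the positivity of the~$c(k)$ with the same effect. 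Where you genuinely depart from the paper --- to your advantage --- is the third bullet. Your preliminary identity $u_{12}+u_0^2=\tu_{12}+\tu_0^2+\epsilon\,\tu_{12;x}-\epsilon^2\tu_{12}^2$, which does follow from~\eqref{BLimMiura} by a direct check, holds at the level of full differential polynomials, and it turns the odd\/-\/in\/-\/$\epsilon$ balance into the honest $2\times2$ \emph{linear} system you display; note that the paper's own~\eqref{u0oddTriv}, $\upsilon_1=\tfrac{1}{4\epsilon^2}\tfrac{\Id}{\Id x}\log(1-4\epsilon^2\upsilon_0)$, cannot be read off from the odd part of~\eqref{BLimMiurau0} alone, since that balance is $(1-2\epsilon^2\omega_0)\upsilon_1=2\epsilon^2\upsilon_0\omega_1-\upsilon_{0;x}$ and is coupled to~$\omega_1$. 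Solving your system indeed yields $\omega_1^{(12)}=\tfrac{1}{4\epsilon^2}\tfrac{\Id}{\Id x}\log\bigl((1-2\epsilon^2\omega_0^{(12)})^2+4\epsilon^2(\omega_0^{(0)})^2\bigr)$ and $\omega_1^{(0)}=-\tfrac{1}{2\epsilon}\tfrac{\Id}{\Id x}\arctan\bigl(2\epsilon\,\omega_0^{(0)}/(1-2\epsilon^2\omega_0^{(12)})\bigr)$, both of which I have verified, both manifestly in $\img\tfrac{\Id}{\Id x}$, and both even in~$\epsilon$ with differential\/-\/polynomial coefficients. This actually \emph{closes} the triviality of all odd\/-\/index densities, including~$\tu_{12}^{(2k+1)}$, which the paper only asserts ``can be deduced'' from the unwieldy radical~\eqref{u12oddTriv}. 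The single caveat is presentational: your closed forms are not literally the paper's displayed~\eqref{u0oddTriv}--\eqref{u12oddTriv} (they are cleaner, equivalent resolutions of the same balance), so you should not claim identity with those equations.
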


\begin{proof}
The densities~$\tu_0^{(k)}$ and~$\tu_{12}^{(k)}$, which are conserved for the bosonic limit~\eqref{BLim} of 
the $N{=}2$,\ $a{=}4$--\/SKdV system~\eqref{SKdVComponents},
retract to the conserved densities for the Korteweg\/--\/de Vries equation~\eqref{KdV} under~$u_0\equiv0$, 
see Example~\ref{ExKdVe}. 
The corresponding reduction of~$\breve{u}_{12}(u_0,u_{12},\epsilon)$ is the
generating function~\eqref{GenFnKdV}. This implies that 
$\breve{u}_{12}=\sum_{k=0}^{+\infty} c(k)u_{12}^k\cdot\epsilon^{2k}+\dots$,
whence the densities~$\tu_{12}^{(2k)}$ are nontrivial.

Following the line of reasonings on p.~\pageref{GenFnKdV}, we consider the
zero\/-\/order terms in Miura's contraction~\eqref{BLimMiura}, which yields
\begin{subequations}
\begin{align}
u_0&=\breve{u}_0\cdot\bigl(1-2\epsilon^2\breve{u}_{12}\bigr),\label{Short0}\\
u_{12}&=\breve{u}_{12}+\epsilon^2\bigl(4\breve{u}_0^2\breve{u}_{12}-\breve{u}_{12}^2\bigr)-4\epsilon^4\breve{u}_0^2\breve{u}_{12}^2.\label{Short12}
\end{align}
\end{subequations}
Therefore,
\[
\breve{u}_0=\frac{u_0}{1-2\epsilon^2\breve{u}_{12}}=
   \sum_{k=0}^{+\infty} u_0\cdot\bigl(2\epsilon^2\breve{u}_{12}\bigr)^k.
\]
Since the coefficients~$c(k)$ of $u_{12}^k\cdot\epsilon^{2k}$ 
in~$\breve{u}_{12}$ are positive, so are the coefficients 
of~$u_0u_{12}^k\cdot\epsilon^{2k}$ in~$\breve{u}_0$ for all~$k\geq0$.
This proves that the conserved densities~$\tu_0^{(2k)}$ are nontrivial as well.

Second, squaring~\eqref{Short0} and adding it to~\eqref{Short12}, we obtain
the equality $u_0^2+u_{12}=\breve{u}_0^2+\breve{u}_{12}-\epsilon^2\breve{u}_{12}^2$. In agreement with $\breve{u}_0{\bigr|}_{\epsilon=0}=u_0$ and
$\breve{u}_{12}{\bigr|}_{\epsilon=0}=u_{12}$, we choose the root
$\breve{u}_{12}=\bigl[1-\sqrt{1-4\epsilon^2\cdot
  \bigl(u_{12}+u_0^2-\breve{u}_0^2\bigr)}\bigr]/(2\epsilon^2)$
of this quadratic equation.
Hence~\eqref{Short0} yields the bi\/-\/quadratic equation upon~$\breve{u}_0$,
\[
1-4\epsilon^2\bigl(u_{12}+u_0^2-\breve{u}_0^2\bigr)=u_0^2\bigr/\breve{u}_0^2.
\]
As above, the proper choice of its root gives~\eqref{u0zero}, whence we return
to~$\breve{u}_{12}$ and finally obtain~\eqref{u12zero}.

Finally, let us substitute the expansions $\tu_0=
\upsilon_0(\epsilon^2)+\epsilon\cdot\upsilon_1(\epsilon^2)$
and $\tu_{12}=\omega_0(\epsilon^2)+\epsilon\cdot\omega_1(\epsilon^2)$
in~\eqref{BLimMiura} for~$\tu_0$ and~$\tu_{12}$, 
see Example~\ref{ExKdVe}. 
By balancing the odd powers of~$\epsilon$ in~\eqref{BLimMiurau0},
it is then easy to deduce the equality
\begin{equation}\label{u0oddTriv}
\upsilon_1\equiv\sum_{k=0}^{+\infty} \tu_0^{(2k+1)}\cdot\epsilon^{2k}=
\frac{1}{4\epsilon^2}\cdot\frac{\Id}{\Id x}\log\bigl(
 1-4\epsilon^2\cdot\upsilon_0\bigr),\qquad
\text{where }
\upsilon_0\equiv\sum_{\ell=0}^{+\infty}\tu_0^{(2\ell)}\cdot\epsilon^{2\ell}.
\end{equation}
The balance of odd powers of~$\epsilon$ in~\eqref{BLimMiurau12}
yields the algebraic equation upon~$\omega_1$, whence, 
in agreement with the initial condition~$\omega_1(0)=\tu_{12}^{(1)}$,
we choose its root
\begin{multline} 
\omega_1=
\Bigl[1-2 \epsilon^2 \omega_0
+4 \epsilon^2 \upsilon_0^2
+4 \epsilon^4 \bigl(\upsilon_1^2
 -2 \omega_0 \upsilon_0^2
 + \upsilon_0 \upsilon_{1;x} 
 + \upsilon_1 \upsilon_{0;x}\bigr)
-8 \epsilon^6 \upsilon_1^2 \omega_0\\
-\Bigl(1
+4 \epsilon^2 \bigl(2\upsilon_0^2
 - \omega_0\bigr)
+4 \epsilon^4 \bigl(\omega_0^2
 +2 \upsilon_0 \upsilon_{1;x}
 -8 \omega_0 \upsilon_0^2
 +2 \upsilon_1 \upsilon_{0;x}
 +2 \upsilon_1^2
 +4 \upsilon_0^4\bigr)\\
+16\epsilon^6\bigl(
2 \omega_0^2 \upsilon_0^2
 -2  \upsilon_1^2 \omega_0
 - \omega_0 \upsilon_0 \upsilon_{1;x}
 - \omega_0 \upsilon_1 \upsilon_{0;x}
 -2 \upsilon_0^2 \upsilon_1 \upsilon_{0;x}
 +2 \upsilon_1 \upsilon_0 \omega_{0;x}
 +2 \upsilon_1^2 \upsilon_0^2
 -4 \omega_0 \upsilon_0^4
 +2 \upsilon_0^3 \upsilon_{1;x}\bigr)\\
+16 \epsilon^8 \bigl(\upsilon_1^4
+2 \omega_0^2 \upsilon_1^2
+4 \omega_0^2 \upsilon_0^4
-2 \upsilon_1^2 \upsilon_0 \upsilon_{1;x}
-4 \omega_0 \upsilon_0^3 \upsilon_{1;x}
+8 \upsilon_1^2 \omega_0 \upsilon_0^2
+2 \upsilon_1^3 \upsilon_{0;x}\\
+ \upsilon_0^2 \upsilon_{1;x}^2
+ \upsilon_1^2 \upsilon_{0;x}^2
+4 \omega_0 \upsilon_0^2 \upsilon_1 \upsilon_{0;x}
-2 \upsilon_0 \upsilon_{1;x} \upsilon_1 \upsilon_{0;x}\bigr)\\
+64 \epsilon^{10} \bigl(\upsilon_0 \upsilon_{1;x} \upsilon_1^2 \omega_0
-2 \omega_0^2 \upsilon_0^2 \upsilon_1^2
- \upsilon_1^3 \upsilon_{0;x} \omega_0
- \upsilon_1^4 \omega_0\bigr)
+64 \epsilon^{12} \upsilon_1^4 \omega_0^2
\Bigr)^{1/2}\Bigr]\bigr/(16 \epsilon^6 \upsilon_1 \upsilon_0).
   \label{u12oddTriv}
\end{multline}
We claim that, using the 
balance of the even powers of~$\epsilon$ in~\eqref{BLimMiura},
the 
representation 
$\sum_{k=0}^{+\infty}\tu_{12}^{(2k+1)}\cdot\epsilon^{2k}\in
\img\tfrac{\Id}{\Id x}$
can be   
deduced, whence $\tu_{12}^{(2k+1)}\sim0$.
\end{proof}

\section{Super\/-\/Hamiltonians for 
$N{=}2$,\ $a{=}4$--\/SKdV hierarchy}\label{SecHam}
\noindent%
In this section, we assign the bosonic super\/-\/Hamiltonians 
$\boldsymbol{\cH}^{(k)}=\int\boldsymbol{h}^{(k)}[\bu]\,\Id\boldsymbol\theta\Id x$
of~\eqref{SKdV} with $a{=}4$ to the Hamiltonians 
$H^{(k)}=\int h_{12}^{(k)}[u_0,u_{12}]\,\Id x$ of its bosonic limit~\eqref{BLim}.
Also, we establish the 
no\/-\/go result on the super\/-\/field, $N{=}2$ supersymmetry invariant deformations of $a{=}4$--\/SKdV
that retract to~\eqref{DefKdV} under the respective reduction in the super\/-\/field~\eqref{N2SuperField}.
At the same time, we initiate the study of Gardner's deformations for
reductions of~\eqref{SKdVComponents} other than~\eqref{BRed}, 
and here we find 
the deformations of two\/-\/component fermion\/-\/boson limit in~it.
However, we observe that the new solutions can not be merged with the deformation~\eqref{AppR} for the bosonic limit of~\eqref{SKdVComponents}.

From the previous section, we know the procedure for recursive production of
the Hamiltonians $H^{(k)}=\int h^{(k)}\,\Id x$ for the bosonic 
limit~\eqref{BLim} of the $N{=}2$, $a{=}4$--\/SKdV equation,
here $h^{(2k)}=\tu_0^{(2k)}$ and~$h^{(2k+1)}=\tu_{12}^{(2k)}$. 
In section~\ref{SecGardner},
we explained why the reconstruction of the densities~$\boldsymbol{h}^{(k)}$
for the bosonic super\/-\/Hamiltonians~$\boldsymbol{\cH}^{(k)}$ 
from $h^{(k)}\bigl[u_0,u_{12}\bigr]$ requires an intermediate step. 
Namely, it amounts to the proper choice of the 
representatives~$h_{12}^{(k)}$ within the equivalence class 
$\bigl\{h^{(k)}\mod\img\tfrac{\Id}{\Id x}\bigr\}$ such that $h_{12}^{(k)}$ 
can be realized under~\eqref{BRed} as the $\theta_1\theta_2$-\/component 
of the super\/-\/density~$\boldsymbol{h}^{(k)}$. This allows to
restore the dependence on the components~$u_1$ and~$u_2$ 
of~\eqref{N2SuperField} and to recover the supersymmetry invariance.
The former means that each $\boldsymbol{h}^{(k)}$~is conserved on~\eqref{SKdVComponents} and the latter implies that $\boldsymbol{h}^{(k)}$~becomes
a differential function in~$\bu$.

The correlation between \emph{unknown} bosonic super\/-\/differential
polynomials $\boldsymbol{h}^{(k)}[\bu]$ and 
the densities~$h^{(k)}\bigl[u_0,u_{12}\bigr]$, which are produced by
the recurrence relation, is established as follows. First, we generate
the homogeneous super\/-\/differential polynomial ansatz for 
the bosonic~$\boldsymbol{h}^{(k)}$ using \texttt{GenSSPoly}, 
see 
note~\ref{FootGenSSPoly} on p.~\pageref{FootGenSSPoly}.
Second, we split 
the super\/-\/field~$\bu$ using the right\/-\/hand side 
of~~\eqref{N2SuperField} and obtain the 
$\theta_1\theta_2$-\/component~$h_{12}^{(k)}\bigl[u_0,u_1,u_2,u_{12}\bigr]$ 
of the differential function~$\boldsymbol{h}^{(k)}[\bu]$.
This is done by the procedure\footnote{The call is
\texttt{ToCoo(N,nf,nb,ex)}, where
\begin{itemize}
\item \texttt{N} is the number of Grassmann variables~$\theta_1,\ldots,\theta_N$;
\item \texttt{nf} is the number of fermionic super\/-\/fields \texttt{f(1)},$\ldots$,\texttt{f(nf)};
\item \texttt{nb} is the number of bosonic super\/-\/fields \texttt{b(1)},$\ldots$,\texttt{b(nb)};
\item \texttt{ex} is the super\/-\/field expression to be split in components.
\end{itemize}
For $N{=}2$, we have
\texttt{f(i)=f(i,0,0)+b(i,1,0)*th(1)+b(i,0,1)*th(2)+f(i,1,1)*th(1)*th(2)},
\texttt{b(i)=b(i,0,0)+f(i,1,0)*th(1)+f(i,0,1)*th(2)+b(i,1,1)*th(1)*th(2)}
as the splitting convention.
The reduction~\eqref{BRed} is achieved by setting
\texttt{b(i,0,1)}, \texttt{b(i,1,0)}, \texttt{f(j,0,1)}, 
and \texttt{f(j,1,0)} to zero for all $i\in[1,\mathtt{nb}]$ and~$j\in[1,\mathtt{nf}]$.}
\texttt{ToCoo},
which now is also available in~\textsc{SsTools}~\cite{SsTools}.
Thirdly, we set to zero the components~$u_1$ and~$u_2$ of the super\/-\/field~$\bu$. 
This gives the ansatz $h_{12}^{(k)}\bigl[u_0,u_{12}\bigr]$ for the representative of the conserved density in the vast equivalence class.
By the above, the gap between~$h_{12}^{(k)}$ and the known~$h^{(k)}$ amounts to~$\tfrac{\Id}{\Id x}\bigl(f^{(k)}\bigr)$, where $f^{(k)}\bigl[u_0,u_{12}\bigr]$ is a homogeneous differential polynomial. 
We remark that the choice of~$f$ is not unique due to the freedom in the choice of~$\boldsymbol{h}^{(k)}\mod\cD_1(\dots)+\cD_2(\dots)$.
We thus arrive at the linear algebraic equation
\begin{equation}\label{CutTail}
h_{12}^{(k)}-\tfrac{\Id}{\Id x}f^{(k)}=h^{(k)},
\end{equation}
which exprimes the equality of the respective coefficients in the polynomials.
The homogeneous polynomial ansatz for~$f^{(k)}$ is again generated 
by~\texttt{GenSSPoly}. Then equation~\eqref{CutTail} is split to the algebraic system by~\textsc{SsTools} and solved by \textsc{Crack}~\cite{WolfCrack}. Hence we obtain the coefficients in~$h_{12}^{(k)}$ and~$f^{(k)}$. \textit{A posteriori},
the freedom in the choice of~$f^{(k)}$ is redundant, and it is convenient to set the surviving \emph{unassigned} coefficients to zero. Indeed, they originate from the choice of a representative from the equivalence class for the super\/-\/density~$\boldsymbol{h}^{(k)}[\bu]$. This concludes the algorithm for the recursive production of homogeneous bosonic $N{=}2$ supersymmetry\/-\/invariant super\/-\/Hamiltonians~$\boldsymbol{\cH}^{(k)}$ for the $N{=}2$,\ $a{=}4$--\/SKdV hierarchy.

\begin{example}
Let us reproduce the first seven super\/-\/Hamiltonians for~\eqref{SKdV}, which were found in~\cite{MathieuNew}. 
In contrast with Example~\ref{ExBLim7dens},
we now list the \emph{properly chosen} 
representatives~$h^{(k)}_{12}\bigl[u_0,u_{12}\bigr]$ for the equivalence classes of conserved densities~$\tu_{0}^{(2k)}$ and $\tu_{12}^{(2k)}$, here~$k\leq3$.
Then we expose the conserved super\/-\/densities~$\boldsymbol{h}^{(k)}$ such that the respective expressions $h_{12}^{(k)}$ are obtained from the 
$\theta_1\theta_2$-\/components 
$\int\boldsymbol{h}^{(k)}\,\Id\boldsymbol{\theta}$ by the reduction~\eqref{BRed}.
\begin{subequations}\label{Hamsa}
\begin{align}
h_{12}^{(0)}&=u_0\sim\tu_0^{(0)},\qquad
\boldsymbol{h}^{(0)}=-\cD_1\cD_2(\bu)\sim 0,\\
h_{12}^{(1)}&=u_{12}\sim\tu_{12}^{(0)},\qquad
\boldsymbol{h}^{(1)}=\bu,\\
h_{12}^{(2)}&=-2u_{12}u_{0} \sim\tu_0^{(2)},\qquad
\boldsymbol{h}^{(2)}=\bu^2,\\
h_{12}^{(3)}&=\tfrac{3}{4}u_{12}^2 - 3u_{12}u_{0}^2 +\tfrac{3}{4} u_{0;x}^2 \sim \tu_{12}^{(2)},\qquad
\boldsymbol{h}^{(3)}=\bu^3 - \tfrac{3}{4}\bu\cD_1\cD_2(\bu),\\
h_{12}^{(4)}&= 3u_{12}^2u_{0} - 4u_{12}u_{0}^3 - \tfrac{3}{2}u_{0}^2u_{0;xx} - u_{12;x}u_{0;x} \sim\tu_0^{(4)},\notag\\
\boldsymbol{h}^{(4)}&=\bu^4 -\tfrac{1}{2}\bu \bu_{xx} - \tfrac{3}{2}\bu^2\cD_1\cD_2(\bu),\\
h_{12}^{(5)}&= -\tfrac{5}{4}u_{12}^3 + \tfrac{15}{2}u_{12}^2u_{0}^2 - 5u_{12}u_{0}^4 + 5u_{12}u_{0}u_{0;xx} + \tfrac{15}{8}u_{12}u_{0;x}^2 
+ \tfrac{15}{2}u_{0}^2u_{0;x}^2 + \tfrac{5}{16}u_{12;x}^2 {} + \notag\\
{}&{}+ \tfrac{5}{16}u_{0;xx}^2 \sim \tu_{12}^{(4)},\qquad
\boldsymbol{h}^{(5)}= \bu^5 - \tfrac{15}{16}\bu^2 \bu_{xx} + \tfrac{5}{8} (\cD_1\cD_2 \bu)^2 \bu - \tfrac{5}{2}\bu^3\cD_1\cD_2 \bu,\\
h_{12}^{(6)}&= -\tfrac{15}{4}u_{12}^3u_{0} + 15u_{12}^2u_{0}^3 
- \tfrac{15}{8}u_{12}^2u_{0;xx} - 6u_{12}u_{0}^5 - \tfrac{75}{4}u_{12}u_{0}u_{0;x}^2 - \tfrac{3}{8}u_{12}u_{0;xxxx} + {} \notag\\
{} & {}+ 5u_{0}^3u_{12;xx} + 15u_{0}^3u_{0;x}^2 + \tfrac{15}{8}u_{0}u_{12;x}^2 + \tfrac{15}{8}u_{0}u_{0;xx}^2 \sim \tu_0^{(6)},\notag
\end{align}\begin{align}%
\boldsymbol{h}^{(6)}&=\bu^6 - \tfrac{15}{8}\bu^3 \bu_{xx} 
+ \tfrac{3}{16}\bu \bu_{4x} + \tfrac{15}{8}(\cD_1\cD_2\bu)^2 - \tfrac{15}{4}\bu^4\cD_1\cD_2\bu + \tfrac{15}{8}\bu_{xx}\cD_1\cD_2\bu - {} \notag\\
{}& {}+ \tfrac{5}{8}\cD_1\cD_2(\bu)\cD_1(\bu)\cD_1(\bu_x),\\
%
h_{12}^{(7)}&= -\tfrac{21}{8}u_{0;4x}u_{0}u_{12} + \tfrac{7}{64}u_{0;xxx}^2 + \tfrac{105}{16}u_{0;xx}^2u_{0}^2 + \tfrac{35}{32}u_{0;xx}^2u_{12} 
- \tfrac{105}{8}u_{0;xx}u_{0}u_{12}^2 - \tfrac{105}{64}u_{0;4x}^4 - {}\notag\\
{}&{}- \tfrac{35}{16}u_{0;x}^2u_{12;xx} + \tfrac{105}{4}u_{0;x}^2u_{0}^4 
- \tfrac{525}{8}u_{0;x}^2u_{0}^2u_{12} - \tfrac{175}{32}u_{0;x}^2u_{12}^2 
+ \tfrac{7}{64}u_{12;xx}^2 + \tfrac{35}{4}u_{12;xx}u_{0}^4 + {} \notag\\
{}&{}+ \tfrac{105}{16}u_{12;x}^2u_{0}^2 - \tfrac{35}{32}u_{12;x}^2u_{12} 
- 7u_{0}^6u_{12} + \tfrac{105}{4}u_{0}^4u_{12}^2 - \tfrac{105}{8}u_{0}^2u_{12}^3 + \tfrac{35}{64}u_{12}^4 \sim \tu_{12}^{(6)},\notag\\
%
\boldsymbol{h}^{(7)}&= \bu^7 - \tfrac{105}{32}\bu^3 \bu_{xx} 
+ \tfrac{7}{32}\bu^2 \bu_{4x} - \tfrac{35}{64}\bu(\cD_1\cD_2\bu)^3 
+ \tfrac{35}{8}\bu^3(\cD_1\cD_2\bu)^2 
- \tfrac{35}{64}(\cD_1\cD_2\bu)^2 \bu_{xx} - {} \notag\\
{}& {} - \tfrac{21}{4}\bu^5\cD_1\cD_2\bu + \tfrac{105}{16}\bu^2\bu_{xx} \cD_1\cD_2\bu + \tfrac{315}{64}\bu\bu_{x}^2\cD_1\cD_2\bu 
+ \tfrac{35}{16}\bu(\cD_1\cD_2\bu)(\cD_1\bu)(\cD_1\bu_x) - {} \notag\\
{}& {} - \tfrac{7}{64}\bu_{4x}\cD_1\cD_2\bu - \tfrac{7}{8}\bu(\cD_1 \bu_{xx} )(\cD_1 \bu_x).
\end{align}
\end{subequations}
Of course, our super\/-\/densities~$\boldsymbol{h}^{(k)}$ are equivalent to those in~\cite{MathieuNew} up to trivial terms~$\cD_1(\dots)+\cD_2(\dots)$.
\end{example}

\begin{rem}
Until now, we have not yet reported 
any attempt of construction of Gardner's \emph{super\/-\/field} deformation
for~\eqref{SKdV}, which means that the ansatz for $\gm_\epsilon$ and~$\cE(\epsilon)$ is written in super\/-\/functions of~$\bu$ 
(c.f.~\cite{MathieuNew}).
This would yield the super\/-\/Hamiltonians~$\boldsymbol{\cH}^{(k)}$ at once,
and the intermediate deformation~\eqref{AppR} of a reduction~\eqref{BRed} for~\eqref{SKdV} would not be necessary. 
At the same time, the knowledge of Gardner's deformations for the reductions allows to inherit a part of the coefficients in the super\/-\/field ansatz by
fixing them in the component expansions 
(\textit{e.g.}, see~\eqref{DefKdV}, \eqref{BLimMiura}, and~\eqref{AppR}).
\end{rem}

Unfortunately, this cut\/-\/through does not work for the $N{=}2$,\ $a{=}4$--\/SKdV equation.

\begin{theor}[\textmd{$N{=}2$,\ $a{=}4$ `no go'}]\label{N2NoGo}
Under the assumptions that 
$N{=}2$ supersymmetry\/-\/invariant 
Gardner's deformation $\gm_\epsilon\colon\cE(\epsilon)\to\cE$ of~\eqref{SKdV} with $a{=}4$ be regular at~$\epsilon=0$, be scaling\/-homogeneous, and
retract to~\eqref{DefKdV} under the reduction $u_0=0$,\ $u_1=u_2=0$ in the super\/-\/field~\eqref{N2SuperField}, there is no such deformation.
\end{theor}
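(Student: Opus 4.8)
The plan is to turn the statement into a finite, triangular algebraic problem by combining scaling homogeneity with the retraction hypothesis, and then to display an obstruction. First I would set up the most general $N{=}2$ supersymmetric, scaling\-homogeneous, origin\-regular Miura contraction $\gm_\epsilon = \{\bu = \tilde{\bu} + \sum_{k\geq 1}\epsilon^{k}Q_{k}[\tilde{\bu}]\}$, where each $Q_{k}$ is a bosonic super\-differential polynomial in $\tilde{\bu},\cD_1\tilde{\bu},\cD_2\tilde{\bu},\cD_1\cD_2\tilde{\bu}$ and their $x$\-derivatives; under the weights $|\bu|=1$, $|\cD_i|=\tfrac12$, $|\Id/\Id x|=1$, $|\epsilon|=-1$ the polynomial $Q_{k}$ has weight $k+1$ and is therefore a finite combination of monomials with undetermined coefficients. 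In the same spirit I would write the parametric extension $\cE(\epsilon)$ directly in $N{=}2$ conserved\-current form, $\tilde{\bu}_{t}=K_{0}[\tilde{\bu}]+\sum_{k\geq 1}\epsilon^{k}K_{k}[\tilde{\bu}]$ with $K_{0}$ the right\-hand side of~\eqref{SKdV} at $a{=}4$ and each $K_{k}=\cD_1(\cdots)+\cD_2(\cdots)$ weight\-homogeneous. Both ans\"atze are produced by \texttt{GenSSPoly}.

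Next I would use the retraction to kill most of the ambiguity. Imposing $u_{0}=u_{1}=u_{2}=0$, i.e. $\bu=\theta_1\theta_2 u_{12}$, and using $\cD_1\cD_2(\theta_1\theta_2 u_{12})=-u_{12}$, $(\theta_1\theta_2)^{2}=0$ together with the explicit action of $\cD_1,\cD_2$ on $\theta_1\theta_2 u_{12}$, one extracts the $\theta_1\theta_2$\-component of each reduced $Q_{k}$. Requiring the reduced contraction to be exactly~\eqref{KdVeKdV} forces the part of $\gm_\epsilon$ that depends only on $\tu_{12}$ and its $x$\-derivatives to equal $\tu_{12}+\epsilon\tu_{12;x}-\epsilon^{2}\tu_{12}^{2}$ and, for every $k\geq 3$, forces the reduced $\theta_1\theta_2$\-component of $Q_{k}$ to vanish identically; likewise the $\tu_{12}$\-only part of $\cE(\epsilon)$ is pinned to~\eqref{KdVe}. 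This is the one place where the hypothesis ``retracts to~\eqref{DefKdV}'' enters: it removes the residual rescaling freedom of $\epsilon$ and fixes the handful of coefficients that the scalar KdV case already determines.

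Then I would assemble the determining system. Differentiating $\gm_\epsilon$ along $t$ by the chain rule, substituting $\cE(\epsilon)$ for $\tilde{\bu}_{t}$, and matching against $K_{0}[\bu]$ evaluated at $\bu=\tilde{\bu}+\sum\epsilon^{k}Q_{k}$, the coefficient of $\epsilon^{n}$ gives a relation of the schematic form $\ell_{K_{0}}(Q_{n})-\ell_{Q_{n}}(K_{0})=K_{n}+(\text{terms in }Q_{1},\dots,Q_{n-1}\text{ and }K_{1},\dots,K_{n-1})$; at first order this amounts to requiring that $[K_{0},Q_{1}]$ be a total super\-divergence. Because $Q_{n}$ and $K_{n}$ first occur at order $n$, the full system is triangular in the powers of $\epsilon$, linear in the coefficients of $\cE(\epsilon)$ and quadratic in those of $\gm_\epsilon$; splitting it into monomial equations with \textsc{SsTools} and running \textsc{Crack} one verifies that it is already inconsistent at a low fixed order --- the conserved\-current form demanded of $K_{n}$ cannot be reconciled with the corrections $\gm_\epsilon$ would have to carry in the $u_{0}$\-sector in order to reproduce the genuinely supersymmetric part of $K_{0}$. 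Hence no $\epsilon$\-dependent solution exists, and the retraction hypothesis rules out the $\epsilon$\-independent one.

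The main obstacle is that nothing in the hypotheses a priori bounds $\deg_\epsilon\gm_\epsilon$, so the ansatz above is finite only order by order, not globally. I expect to resolve this by noting that the obstruction is local in the $\epsilon$\-filtration: it already appears among the order\-$\epsilon^{2}$ and order\-$\epsilon^{3}$ equations, which involve only $Q_{1},Q_{2},Q_{3}$ and the matching $K_{j}$, so no truncation of any degree whatsoever can be completed. Alternatively, one may invoke the power\-balance estimate of Section~\ref{SecBurg}, which for the cubic\-nonlinear bosonic limit~\eqref{BLim} of~\eqref{SKdV} fixes the ratio $\deg\gm_\epsilon:\deg\cE(\epsilon)$ and thereby confines the search to a single finite computation. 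The computational flavour of the construction is intrinsic; the conceptual content is that, once the $u_{12}$\-sector has been nailed to the classical Gardner deformation~\eqref{DefKdV}, $N{=}2$ supersymmetry invariance becomes incompatible with presenting $\cE(\epsilon)$ as a super\-conserved current.
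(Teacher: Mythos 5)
Your proposal is correct and follows essentially the same route as the paper's proof: a weight\-/homogeneous super\-/field ansatz for $\gm_\epsilon$ and $\cE(\epsilon)$, the retraction to~\eqref{DefKdV} used to pin down the $\tu_{12}$\-/sector coefficients, the observation that regularity makes the determining system triangular in powers of~$\epsilon$ (so a contradiction at finite order is universal), and a computer\-/algebra verification that the quadratic\-/order equations are already inconsistent. The paper simply makes the final obstruction explicit as the incompatible pair $q_{17}=-p_{10}$ and $p_{10}+q_{17}+1=0$.
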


This rigidity statement, although under a principally different set of initial hypotheses, 
is contained in~\cite{MathieuNew}. In particular, there it was supposed that $\deg\gm_\epsilon=\deg\cE(\epsilon)=2$, which turns to be on the obstruction threshold, see below. We reveal the general nature of this `no go' result.

\begin{proof}
Suppose there is the super\/-\/field Miura contraction~$\gm_\epsilon$,
\begin{multline*}
\bu = \tilde{\bu}  
 +\epsilon\bigl(p_3\tilde{u}^2 - p_1\cD_1\cD_2\tilde{\bu} + p_2\tilde{\bu}_x\bigr) 
 +\epsilon^2\Bigl(p_{15}\tilde{\bu}^3 + p_{13}\tilde{\bu}\tilde{\bu}_x 
   + p_{10}\cD_2(\tilde{\bu})\cD_1(\tilde{\bu}) \\
   - p_{12}\cD_1\cD_2(\tilde{\bu})\tilde{\bu}  
   - p_{11}\cD_1\cD_2(\tilde{\bu}_x) + p_{14}\tilde{\bu}_{xx}\Bigr) + \cdots.
\end{multline*}
To recover the deformation~\eqref{DefKdV} upon~$u_{12}$ in~$\bu$, 
we split~$\gm_\epsilon$ in components and fix the coefficients of~$\epsilon\tu_{12;x}$ and~$\epsilon^2\tu_{12}^2$, see~\eqref{KdVeKdV}.
By this argument, the expansion of~$\tilde{\bu}_x$ yields $p_2=1$, while the equality
$-p_{12}\cD_1\cD_2(\tilde{\bu})\tilde{\bu} + p_{10}\cD_2(\tilde{\bu})\cD_1(\tilde{\bu})
=(p_{12}-p_{10})\theta_1\theta_2u_{12}^2+\dots$
implies that~$p_{12}=p_{10}-1$.
Next, we generate the homogeneous ansatz for~$\cE(\epsilon)$, which contains
$\tilde{\bu}_t=\dots+\epsilon^2\cdot\tfrac{\Id}{\Id x}\bigl(q_{17}(\cD_2\bu)(\cD_1\bu)\bu +\dots\bigr)+\dots$ in the right\/-\/hand side (the coefficient $q_{17}$ will appear in the obstruction).
We stress that now both~$\gm_\epsilon$ and $\cE(\epsilon)$ can be formal power series in~$\epsilon$ without any finite\/-\/degree polynomial truncation.

Now we split the determining equation~$\gm_\epsilon\colon\cE(\epsilon)\to\cE$
to the sequence of super\/-\/differential polynomial equalities ordered by the powers of~$\epsilon$. By the regularity assumption, the coefficients of higher powers of~$\epsilon$ never contribute to the equations that arise at its lower degrees. Consequently, every contradiction obtained at a finite order in the algebraic system is universal and precludes the existence of a solution. (Of course, we assume that the contradiction is not created artificially by an excessively low order polynomial truncation of the expansions in~$\epsilon$.)

This is the case for the $N{=}2$,\ $a{=}4$--\/SKdV. Using~\textsc{Crack}~\cite{WolfCrack}, we solve all but two algebraic equations in the quadratic approximation. The remaining system is
\[
q_{17}  = -p_{10}, \qquad   p_{10} + q_{17} + 1=0.
\]
This contradiction concludes the proof.
\end{proof}

\begin{rem}
In 
Theorem~\ref{N2NoGo} for~\eqref{SKdV} with~$a{=}4$, 
we state the non\/-\/existence of the Gardner deformation in a class of differential super\/-\/polynomials in~$\bu$, that is, of $N{=}2$ supersymmetry\/-\/invariant solutions that incorporate~\eqref{DefKdV}. 
Still, we do \emph{not} claim the non\/-\/existence of local regular Gardner's deformations for the four\/-\/component system~\eqref{SKdVComponents} in the class of 
differential functions of $u_0$, $u_1$, $u_2$, and~$u_{12}$.
\end{rem}

Consequently, it is worthy to deform the reductions of~\eqref{SKdVComponents} other than~\eqref{BRed}. Clearly, if there is a deformation for the entire system, then such partial solutions contribute to it by fixing the parts of the 
coefficients.

\begin{example}
Let us consider the reduction~$u_0=0$,\ $u_2=0$ in~\eqref{SKdVComponents} with~$a{=}4$. This is the two\/-\/component boson\/-\/fermion system
\begin{equation}\label{FB}
u_{1;t}=-u_{1;xxx}-3\bigl(u_1u_{12}\bigr)_x,\qquad
u_{12;t}=-u_{12;xxx}-6u_{12}u_{12;x}+3u_1u_{1;xx}.
\end{equation}
Notice that system~\eqref{FB} is \emph{quadratic}\/-\/nonlinear in both fields,
whence the balance $\deg\gm_\epsilon:\deg\cE(\epsilon)$ for its polynomial Gardner's deformations remains~$1:1$.

We found a unique Gardner's deformation of degree~$\leq4$ for~\eqref{FB}:
the Miura contraction~$\gm_\epsilon$ is cubic in~$\epsilon$,
\begin{subequations}\label{DefFB}
\begin{align}
u_1&=\tu_1,\qquad u_{12}=\tu_{12}
-\tfrac{1}{9} \epsilon^3 \tu_{1}\tu_{1;xx},\label{FBeFB}\\
\intertext{and the extension~$\cE(\epsilon)$ is given by the formulas}
\tu_{1;t} &=  - \tu_{1;xxx} - 3\bigl(\tu_1\tu_{12}\bigr)_x,\notag\\
\tu_{12;t} &= - \tu_{12;xxx} - 6\tu_{12}\tu_{12;x} + 3\tu_1\tu_{1;xx} + {}\notag \\
{}&{}\qquad{}+\tfrac{1}{3}\epsilon^3\Bigl( 
   u_1u_{1;xx}u_{12}-3u_1u_{1;x}u_{12;x}+u_{1;x}u_{1;xxx}
\Bigr)_x.\label{FBe}
\end{align}
\end{subequations}
However, we observe, first, that the contraction~\eqref{KdVeKdV} is not recovered\footnote{Surprisingly, the quadratic approximation~\eqref{KdVeKdV}
in the deformation problem for~\eqref{SKdVComponents} is very restrictive and leads to a unique solution~\eqref{BLimMiura}--\eqref{AppR} for~\eqref{BLim}.
Relaxing this constraint and thus permitting the coefficient of~$\epsilon^2\tu_{12}^2$ in~$\gm_\epsilon$ be arbitrary, 
we obtain two other real 
and two pairs of complex conjugate solutions 
for the deformations problem.        
They constitute the real and the complex orbit, respectively, 
under the action of the discrete symmetry 
$u_0\mapsto {-}u_0$, $\xi\mapsto {-}\xi$
of~\eqref{BLimBurg}.
}
by~\eqref{FBeFB} under~$u_1\equiv0$. Hence the deformation~\eqref{DefFB} and its mirror copy under~$u_1\leftrightarrow-u_2$ can not be merged
with~\eqref{BLimMiura} and~\eqref{AppR} to become parts of the deformation for~\eqref{SKdVComponents}.

Second, we recall that the fields~$u_1$ and~$u_2$ are, seemingly, the only local fermionic conserved densities for~\eqref{SKdVComponents} with~$a{=}4$.
Consequently, either the velocities~$\tu_{1;t}$ and~$\tu_{2;t}$ in  Gardner's extensions~$\cE(\epsilon)$ of~\eqref{SKdVComponents} are not expressed in the form of conserved currents 
(although this is indeed so at~$\epsilon=0$)
or the components $u_i=u_i\bigl(\bigl[\tu_0,\tu_1,\tu_2,\tu_{12}\bigr],\epsilon\bigr)$ of the Miura contractions~$\gm_\epsilon$ are the identity 
mappings~$u_i=\tu_i$, here~$i=1,2$, whence either the Taylor coefficients~$\tu_i^{(k)}$ of~$\tu_i$ are not termwise conserved on~\eqref{SKdVComponents} or
there appear no recurrence relations at all. This will be the object of another paper.
\end{example}

\section*{Conclusion}
\noindent%
We obtained the no\/-\/go statement 
for regular, scaling\/-\/homogeneous polynomial Gardner's deformations of the $N{=}2$,\ $a{=}4$--\/SKdV equation under the assumption that the solutions retract to the original formulas~\eqref{DefKdV} by Gardner~\cite{Gardner}.
At the same time, we found a new 
deformation~(\ref{BLimMiura}--\ref{BLimBurgE}) 
of the Kaup\/--\/Boussinesq equation~\eqref{BLimBurg}
that specifies the second flow in the bosonic limit of the super\/-\/hierarchy.
We emphasize that other known nontrivial deformations for 
the Kaup\/--\/Boussinesq equation~\cite{PamukKale} 
can be used for this purpose with equal success.

We exposed the two\/-\/step procedure
for recursive production of the 
bosonic super\/-\/Hamiltonians~$\boldsymbol{\cH}^{(k)}$.
We formulated the entire algorithm in full detail such that,
with elementary modifications, it is applicable to other 
supersymmetric KdV\/-\/type systems. 


\subsection*{Acknowledgements}
The authors thank P.~Mathieu, 
J.~W.~van de Leur, 
and Z.~Popowicz 
for helpful 
discussions.
This research is partially supported by NSERC (for V.\,H. and T.~W.)
and NWO grants~B61--609 and VENI~639.031.623
 (for A.\,V.\,K.). 
A part of this research was done while A.~V.~K. was visiting at
CRM (
Mont\-r\'eal) and Max Planck Institute for Mathematics (Bonn)%
, and A.~O.~K. was visiting at Utrecht University; 
the financial support and 
hospitality of these institutions are
gratefully acknowledged.
A.~V.~K. thanks 
the Organizing committee of 9th International workshop SQS'09 for support
and the organizers of 8th International conference `Symmetry in Nonlinear Mathematical Physics' for warm hospitality.

\end{document}